%
\documentclass[runningheads]{llncs}
%
%

\usepackage{amsmath}
\usepackage{amssymb}
\usepackage{mathtools}
\usepackage{stmaryrd}
\usepackage{dsfont}
\usepackage{tikz}

\newcommand{\N}{\mathbb{N}} 
\renewcommand{\phi}{\varphi}
\renewcommand{\implies}{\supset}

\newcommand{\FO}{\mathsf{FO}}
\newcommand{\SFO}{2\mathsf{Sor.FO}}
\newcommand{\FOML}{\mathsf{FOML}}

\newcommand{\TML}{\mathsf{TML}}
\newcommand{\PTML}{\mathsf{PTML}}
\newcommand{\IQML}{\mathsf{IQML}}

\newcommand{\existsBox}{[\exists]}
\newcommand{\forallBox}{[\forall]}
\newcommand{\existsDiamond}{\langle \exists \rangle}
\newcommand{\forallDiamond}{\langle \forall \rangle}

\newcommand{\veeS}{\overset{\vee}{S}}

\newcommand{\Ps}{\mathcal{P}}
\newcommand{\Var}{\mathcal{V}}
\newcommand{\VarX}{\mathcal{V}_X}
\newcommand{\VarTau}{\mathcal{V_\tau}}

\newcommand{\Fv}{\mathsf{Fv}}
\newcommand{\md}{\mathsf{md}}

\newcommand{\Tr}{\mathsf{Tr}}

\newcommand{\M}{\mathcal{M}}
\newcommand{\W}{\mathcal{W}}
\newcommand{\D}{\mathcal{D}}
\newcommand{\R}{\mathcal{R}}
\newcommand{\Rs}{\mathds{R}}
\newcommand{\I}{\mathds{I}}

\newcommand{\Mfo}{\mathfrak{M}}

\newcommand{\EF}{\mathsf{EF}}
\newcommand{\PI}{\mathbf{Sp}}
\newcommand{\PII}{\mathbf{Dup}}

\newcommand{\AxA}{\mathcal{AX_A}}

\newcommand{\bisEq}{\leftrightarroweq}
\newcommand{\elemEq}{\equiv_\IQML}

\newcommand{\PSPACE}{\mathsf{PSPACE}}

\newcommand{\BR}{\mathsf{BR}}


\begin{document}
\title{Propositional modal logic with implicit modal quantification}
%
%
\author{Anantha Padmanabha\inst{1}\orcidID{0000-0002-4265-5772} \and
R Ramanujam\inst{1}
\authorrunning{Padmanabha and Ramanujam}}
%
\institute{Institute of Mathematical Sciences, HBNI, Chennai, India 
\email{\{ananthap,jam\}@imsc.res.in}}
\maketitle              
\begin{abstract}
Propositional term modal logic is interpreted over Kripke structures with
unboundedly many accessibility relations and hence the syntax admits variables
indexing modalities and quantification over them. This logic is undecidable, 
and we consider a variable-free propositional bi-modal logic with implicit
quantification. Thus $\forallBox \alpha$ asserts necessity over all 
accessibility relations and $\existsBox \alpha$ is classical necessity over
some accessibility relation. The logic is associated with a natural 
bisimulation relation over models and we show that the logic is exactly
the bisimulation invariant fragment of a two sorted first order logic.
The logic is easily seen to be decidable and admits a complete axiomatization
of valid formulas. Moreover the decision procedure extends naturally to the
`bundled fragment' of full term modal logic.

\keywords{Term modal logic  \and  Implicitly quantified modal logic \and Bisimulation invariance \and Bundled fragment  }
\end{abstract}
\section{Introduction}
Propositional multi-modal logics \cite{blueBook,hughesBook} are used extensively in
the context of multi-agent systems, or to reason about labelled transition systems.
In the former case, $\Box_i \alpha$ might refer to knowledge or belief of agent $i$
that $\alpha$ holds. In the latter case, $\Diamond_a \alpha$ may assert the existence
of an $a$-labelled transition from the current state to one in which $\alpha$ holds.
Such applications include  epistemic reasoning \cite{epistemicBook,hansBook}, 
games\cite{van2007}, system verification\cite{temporalPaper,atlPaper} and more.

In either of the settings, the indices of modalities come from a fixed finite set.
However, the applications themselves admit systems of unboundedly many agents,
or infinite alphabets of actions. The former is the case in dynamic networks of
processes, and the latter in the case of systems handling unbounded data. In fact,
the set of agents relevant for consideration may itself be dynamic, changing
with state. 

Such  motivations naturally lead to modal logics with unboundedly
many modalities, and indeed quantification over modal indices. Grove and Halpern 
\cite{grove1,grove2} discuss epistemic logics where the agent set is not fixed 
and the agent names are not common knowledge. Khan et al. \cite{aquil} use 
unboundedly many modalities and allow quantification over them to model 
information systems in approximation spaces. Other works on indexed 
modalities include Passy and Tinchev \cite{Passy}, Gargov and Goranko 
\cite{Gargov}, Blackburn \cite{Blackburn}.

Term Modal logic($\TML$), introduced by Fitting, Voronkov and Thalmann\cite{TML} 
offers a natural solution to these requirements. It extends first order logic
with modalities of the form $\Box_x \alpha$ where $x$ is a variable (and hence
can be quantified over).  Thus we can write a formula of the form: 
$\forall x  \Box_x(  p(x) \implies \exists y \Diamond_y q(x,y))$.   
Kooi (\cite{kooi2007}) considers the expressivity of $\TML$ in epistemic 
setting.  Wang and Seligman (\cite{WangTML}) introduce a restricted version 
of $\TML$ where  we have assignments in place of quantifiers (formulas of 
the form $[x:= b]K_x(\alpha)$ where $b$ is a constant, whose 
interpretation as an agent will be assigned to $x$). 
    
Note that $\TML$ extends first order logic, and hence its satisfiability problem 
is undecidable. In \cite{PR17} we prove that the problem  is undecidable even 
when the atoms are restricted to boolean propositions ($\PTML$). Hence the question
of finding decidable fragments of $\PTML$ is well motivated. 
In \cite{PR17} we prove that  the {\em monodic fragment} of $\PTML$ is
decidable. The monodic fragment is a restriction allowing at most one 
free variable within the scope of a modality. i.e, every subformula of 
the form $\Box_x \alpha$ has $FV(\alpha) \subseteq \{x\}$. 

Orlandelli and Corsi \cite{orlandelli} consider two decidable fragments: $(1)$ When  
quantifier  occurrence is restricted to the form:  $\exists x \Box_x \alpha$ 
(denoted by $\existsBox \alpha$); $(2)$ Quantifiers appear in a restricted guarded form: 
$\forall x (P(x) \Rightarrow \Box_x \alpha)$ and $\exists x (P(x) \land 
\Box_x \alpha)$ (and their duals).  The corresponding first order modal logic counter parts of the first of these fragments is studied by Wang \cite{Wang17}. 
Shtakser (\cite{shtakser2018}) considers a monadic second order version of the restricted guards (with propositional atoms) of the form $\forall X (P(X) \Rightarrow \Box_X \alpha)$ 
and $\exists X (P(X) \land \Box_X \alpha)$ where $X$ is quantified over subsets of 
indices and $P$ is interpreted appropriately.  
   These fragments are semantically motivated from their interest in the epistemic logic to model the notions like `everyone knows' and `someone knows' and community knowledge (ex: All eye-witnesses know who killed Mary). 
    
Note that when modalities and quantifiers are `bundled' together and atomic formulas
are propositional, $\exists x \Box_x \alpha$ can be replaced by a variable free
modality $\existsBox \alpha$, and similarly $\forall x \Box_x \alpha$ by
$\forallBox$. In some sense this is the most natural variable free fragment of
$\PTML$ with modalities being {\em implicitly quantified}. This is the logic
$\IQML$ studied in this paper.

Just as propositional modal logic is the bisimulation-invariant fragment of 
first order logic, we show that $\IQML$ is the bisimulation-invariant fragment of
an appropriate two-sorted first order logic. The notion of bisimulation needs
to be carefully re-defined to account for quantification over edge labels. Other
natural questions on $\IQML$ such as decidability of satisfiability and complete
axiomatization of valid formulas are answered easily. Interestingly, the natural
tableau procedure for the logic can be extended to the `bundled fragment' of 
$\TML$ with predicates of arbitrary arity, by an argument similar to the one
developed in \cite{PRW18} (for a `bundled fragment' of first order modal logic).

\section{The logic}
 \label{sec: syntax and semantics}
 
 We start with $\PTML$, the propositional fragment of Term-Modal logic. Since we
will only study its variable free fragment later, we consider here only the 
pure vocabulary (no constant and function symbols) with only variables as terms 
and without equality.

\begin{definition}[$\PTML$ syntax]
\label{def: PTML syntax}
Let $\Var$ be a countable set of variables and $\Ps$ be a countable set of propositions. The syntax of $\PTML$ is given by:

$$\phi := p \mid \neg \phi \mid \phi \land \phi \mid \exists x\ \phi \mid \Diamond_x \phi$$ where $p \in \Ps$ and $x\in \Var$.
\end{definition}

The boolean operators $\lor,  \implies$ are defined in the standard way. The dual operators for quantifiers and modalities are given by $ \forall x \phi = \neg \exists x\ \neg \phi$ and $\Box_x \phi = \neg \Diamond_x \neg \phi$. The notion of free variables $\Fv(\phi)$ and modal depth $\md(\phi)$ are standard.

In the semantics, unlike classical modal logics, the agent set is not fixed, but specified along with the structure. Thus the Kripke frame for $\PTML$ is given by $(\W,\D,\R)$ where $\W$ is a set of worlds, $\D$ is a potential set of agents and $\R\subseteq (\W\times \D \times \W)$. The agent dynamics is captured by a function
($\delta: \W \rightarrow 2^\D$ below) that specifies, at any world $w$, the set of agents {\em alive}
(or meaningful) at $w$. Then coherence demands that whenever $(u,d,v)\in \R$, we
have that $d \in \delta(u)$: only an agent alive at $u$ can consider $v$ accessible.

A {\em monotonicity} condition  is imposed on the accessibility relation as well:
whenever $(u,d,v)\in \R$, we have that $\delta(u)\subseteq \delta(v)$. This
is required to handle interpretations of free variables.  Hence the models are called `increasing agents' models. For more details on this restriction,  refer \cite{FOML,TML}.
 
\begin{definition}[$\PTML$ structure]
\label{def: TML structure}
An (increasing agent) model for $\PTML$ is a tuple $\M=(\W, \D, \R, \delta, \rho)$ where, 
$\W$ is a non-empty set of worlds, $\D$ is a non-empty set of agents, $\R\subseteq (\W\times \D\times \W)$,
$\delta:\W\to 2^\D$ assigns to each $w\in \W$ a \textit{non-empty} local agent set s.t. 
$(w,d,v) \in \R$ implies $d\in \delta(w)\subseteq \delta(v)$ for any $w,v\in \W$, and 
$\rho: \W \mapsto 2^\Ps$.
\end{definition}

To interpret free 
variables, we  need a variable assignment function (interpretation) $\sigma: \Var\mapsto \D$.  Call $\sigma$
{\em relevant} at $w \in \W$ for a formula $\phi$ if $\sigma(x)\in \delta(w)$ for all $x\in \Fv(\phi)$. The
increasing agent condition ensures that whenever $\sigma$ is relevant at $w$ for $\phi$ and
we have $(w,d,v)\in \R$, then $\sigma$ is relevant at $v$ for all subformulas of $\phi$. 

\begin{definition}[$\PTML$ semantics]
\label{def: PTML semantics}
Given a model $\M$, a formula $\phi$,\ $w \in \W^\M$, and an interpretation $\sigma$ that is relevant at $w$ for $\phi$, define $\M,w,\sigma \models \phi$ inductively as follows:
$$\begin{array}{|lcl|}
\hline
\M, w, \sigma\models p&\Leftrightarrow &  p\in \rho(w) \\ 
\M, w, \sigma\models \neg\phi &\Leftrightarrow&   \M, w, \sigma\not\models \phi \\ 
\M, w, \sigma\models (\phi\land \psi) &\Leftrightarrow&  \M, w, \sigma\models \phi \text{ and } \M, w, \sigma\models \psi \\ 
M, w, \sigma\models \exists x \phi &\Leftrightarrow& \text{there is some $d\in \delta(w)$ such that } \M, w, \sigma[x\mapsto d]\models\phi \\
\M, w, \sigma\models \Diamond_x \phi &\Leftrightarrow& \text{there is some } v\in W^\M \text{ such that } (w,\sigma(x),v) \in \R^\M \\ &&\text{ and }  \M, v, \sigma\models\phi\\

\hline
\end{array}$$

\noindent where $\sigma[x\mapsto d]$ denotes an interpretation that is the same as $\sigma$ 
except for mapping $x$ to $d$. 
\end{definition}

 Note that $\M,w,\sigma \models \phi$ is inductively defined only when $\sigma$ is 
relevant at $w$. A formula $\phi$ is \textit{satisfiable}, if there is some $\M$ and some $w\in \W^\M$ and an interpretation $\sigma$ which is relevant at $w$ for $\phi$ such that $\M,w,\sigma \models \phi$. Also, $\phi$ is \textit{valid} if $\neg \phi$ is not satisfiable. In \cite{PR17}, we prove that the satisfiability problem for $\PTML$ is undecidable.

As discussed in the previous section, we consider the variable free fragment of $\PTML$,
with implicit modal quantification ($\IQML$).

\begin{definition}[$\IQML$ syntax]
\label{def: IQML syntax}
Let $\Ps$ be a countable set of propositions. The syntax of $\IQML$ is given by:
$$\phi := p \in \Ps \mid \neg\phi \mid \phi \land \phi \mid \existsBox \phi \mid \forallBox \phi$$
\end{definition}

Note that,  $\existsBox \phi$ translates to $\exists x \Box_x \phi$ in $\PTML$. Similarly $\forallBox \phi$ translates to $\forall x \Box_x \phi$. Since there are no variables in $\IQML$, it is closer to classical propositional modal logics where the set of modal indices is not fixed a priori.

The boolean operators $\vee$ and $\implies$ are defined in the standard way. Also we define $\forallDiamond \phi = \neg \existsBox \neg \phi$ and $\existsDiamond \phi = \neg \forallBox \neg \phi$ to be the respective duals of the modal operators.

 In classical modal logics, the  Kripke structure for $n$ modalities is given by $\M = (\W, \R_1,\cdots \R_n, \rho)$ where each $R_i \subseteq (\W \times \W)$ is the accessibility relation for the corresponding index and $\rho$ is the valuation of propositions at every world. But in case of $\IQML$, the modal index set is specified along with the model.

\begin{definition}[$\IQML$ structure]
\label{def: IQML structure}
An $\IQML$ structure is given by $\M = (\W,\Rs_\I,\rho)$ where $\W$ is a non-empty set of worlds, $\I$ is a non-empty countable index set and $\Rs = \{ \R_i \mid i\in \I\}$ where each $\R_i \subseteq(\W \times \W)$ and $\rho: \W \mapsto 2^\Ps$ is the valuation function.
\end{definition}

Note that $\I$ could be finite or countably infinite. Hence we assume $\I$ to be some 
initial segment of $\N$ or $\N$ itself. Thus we often denote the model as 
$\M = (\W, [\R_1, \R_2 \cdots],\rho)$ when $\I$ is clear from the context. Given a model $\M$, we refer to $\W^\M$ etc. to denote its corresponding components.
The semantics is defined naturally as follows:

\begin{definition}[$\IQML$ semantics]
\label{def: IQML semantics}
Given a model $\M$, a formula $\phi$,\ $w \in \W^\M$, define $\M,w \models \phi$ inductively as follows:
$$\begin{array}{|lcl|}
\hline
\M, w\models p&\Leftrightarrow &  p\in \rho(w) \\ 
\M, w\models \neg\phi &\Leftrightarrow&   \M, w, \not\models \phi \\ 
\M, w\models (\phi\land \psi) &\Leftrightarrow&  \M, w \vDash \phi \text{ and } \M, w \models \psi \\ 
M, w\models \existsBox \phi &\Leftrightarrow& \text{there is some } i\in \I \text{ such that for all } u\in \W\\
&& \text{ if } (w,u) \in R_i \text { then } \M, u\models\phi \\
\M, w \vDash \forallBox \phi &\Leftrightarrow& \text{for all } i\in \I \text{ and for all } u\in \W\\
&& \text{ if } (w,u) \in R_i \text { then } \M, u\models\phi \\

\hline
\end{array}$$
\end{definition}

The formula $\phi \in \IQML$ is \emph{satisfiable} if there is some model $\M$ and $w\in \W$ such that $\M,w\models \phi$. A formula $\phi$ is said to be \emph{valid} if $\neg \phi$ is not satisfiable. 

In the sequel we adopt the following convention. Given any model $\M$, $w\in \W$ and  
a formula of the form $\existsBox \phi$, if $\M,w \models \existsBox \phi$ and 
$i\in \I$ is the corresponding witness then we write $\M,w \models \Box_i \phi$ 
(similarly we have $\M,w \models \Diamond_i \phi$ for $\existsDiamond \phi$). 

\section{Axiom system and completeness}

Table \ref{tab: IQML axioms} gives a complete axiom system for  the valid formulas of $\IQML$. 

The axioms and inference rules are standard.  Axiom $A2$ describes the interaction 
between $\forallBox$ and $\forallDiamond$ operators.  The ($\existsBox$Nec) rule
is  sound since $\I$ is non-empty.  Note that the axiom system is similar to 
the one in \cite{grove1}, except for ($\forallBox$Nec) and ($\existsBox$Nec). This 
is because $\IQML$ has no names, as opposed to the logic considered in \cite{grove1}.

\begin{table}
\label{tab: IQML axioms}
\centering
\begin{tabular}{| l l |}
\hline
&$\vdash_{\AxA}$\\
\hline
\label{propositional packed axioms}
$A0.$& All instances of propositional validities.\\
$A1.$ & $\forallBox (\phi \implies \psi) \implies (\forallBox \phi 
\implies \forallBox \psi)$\\

$A2.$ & $\forallBox (\phi \implies \psi) \implies ( \forallDiamond \phi \implies \forallDiamond \psi)$\\

(MP)& \begin{tabular}{c}
		$\phi\implies\psi , \phi$\\
		\hline
		$\psi$
		\end{tabular} \\
		
($\forallBox$Nec)& \begin{tabular}{c}
		$\phi$\\
		\hline
		$\forallBox \phi$
		\end{tabular}\\
		
($\existsBox$Nec)& \begin{tabular}{c}
		$\phi$\\
		\hline
		$\existsBox \phi$
		\end{tabular}\\

\hline
\end{tabular}
\caption{$\IQML$ axiom system $(\AxA)$}
\end{table}

\begin{theorem}
\label{thm: IQML completeness}
$\vdash_{\AxA}$ is sound and complete for $\IQML$.
\end{theorem}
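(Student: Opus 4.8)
The plan is to prove soundness by a direct semantic check and completeness by a canonical-model construction. Soundness is routine: $A0$ is propositional; $A1$ is $K$-distribution for the universal-style box $\forallBox$, which semantically is a box over the union $\bigcup_{i\in\I} R_i$; for $A2$, in each relation $R_i$ a $\Diamond_i$-witness of $\phi$ also satisfies $\psi$ because $\Box_i(\phi\implies\psi)$ holds, so $\forallDiamond\phi$ upgrades to $\forallDiamond\psi$; ($\existsBox$Nec) is sound precisely because $\I\neq\emptyset$. I would then concentrate on completeness via a Henkin-style canonical model whose worlds are the maximal $\AxA$-consistent sets (MCSs), using a Lindenbaum lemma to extend every consistent set to an MCS; the theorem follows once a Truth Lemma $\M,\Gamma\models\phi \iff \phi\in\Gamma$ is established.

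Before building the model I would record the derived theorems that do the real work. First, $\existsBox\top$ holds by ($\existsBox$Nec). Second, rewriting $A2$ through the dual $\existsBox\phi=\neg\forallDiamond\neg\phi$ shows that $A2$ is exactly the distribution law $\forallBox(\phi\implies\psi)\implies(\existsBox\phi\implies\existsBox\psi)$ in disguise. From these I obtain $\forallBox\phi\implies\existsBox\phi$ (distribution with antecedent $\top$) and $\existsBox\bot\implies\existsBox\phi$ (distribution with $\bot\implies\phi$), i.e. a world carrying a `dead' index witnesses every $\existsBox$. The crux is turning these into one family of global relations. Taking the (countable) index set $\I$ to be the set of all formulas, I would define, for each formula $\chi$,
$$\Gamma\, R_\chi\, \Delta \iff \{\psi \mid \forallBox\psi\in\Gamma\}\subseteq\Delta \ \text{ and }\ (\existsBox\chi\in\Gamma \Rightarrow \chi\in\Delta).$$
Thus at a world $\Gamma$ with $\existsBox\chi\in\Gamma$ the relation $R_\chi$ keeps exactly the $\forallBox$-successors containing $\chi$ (so $R_\chi$ witnesses $\existsBox\chi$), whereas at worlds with $\existsBox\chi\notin\Gamma$ the relation $R_\chi$ retains all $\forallBox$-successors (so it witnesses nothing spurious).

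The Truth Lemma is by induction on $\phi$, the Boolean cases being standard. For $\forallBox$ the forward direction is immediate from the definition of $R_\chi$, and the backward direction is the usual existence argument: if $\forallBox\phi\notin\Gamma$ then $\{\psi \mid \forallBox\psi\in\Gamma\}\cup\{\neg\phi\}$ is $\AxA$-consistent (otherwise $A1$ and ($\forallBox$Nec) force $\forallBox\phi\in\Gamma$), and any MCS $\Delta$ extending it is an $R_\top$-successor of $\Gamma$ since $\existsBox\top\in\Gamma$, giving a successor refuting $\phi$. The $\existsBox$ case is where the construction is tested. If $\existsBox\phi\in\Gamma$, then $R_\phi$ witnesses it directly (vacuously if its successor set is empty). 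The hard direction, and the main obstacle of the whole proof, is to show that if $\existsBox\phi\notin\Gamma$ then no index witnesses it, i.e. every relation $R_\chi$ has a successor containing $\neg\phi$; this is exactly where an index must be prevented from becoming a spurious dead-end. Here the distribution law is essential: when $\existsBox\chi\in\Gamma$ one shows $\{\psi \mid \forallBox\psi\in\Gamma\}\cup\{\chi,\neg\phi\}$ is consistent (else $\forallBox(\chi\implies\phi)\in\Gamma$ and distribution yields $\existsBox\phi\in\Gamma$), and when $\existsBox\chi\notin\Gamma$ one uses $\forallBox\phi\implies\existsBox\phi$ to get $\forallBox\phi\notin\Gamma$ and hence a $\forallBox$-successor with $\neg\phi$; the degenerate case of an empty $R_\chi(\Gamma)$ cannot occur, since emptiness forces $\existsBox\bot\in\Gamma$ and therefore $\existsBox\phi\in\Gamma$, contradicting the hypothesis.

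With the Truth Lemma in hand, completeness is immediate: a consistent $\phi$ lies in some MCS $\Gamma$, so $\M,\Gamma\models\phi$ and $\phi$ is satisfiable; contrapositively every valid formula is a theorem. I expect essentially all the difficulty to be concentrated in the $\existsBox$ clause of the Truth Lemma, where the global definition of $R_\chi$ and the $A2$-distribution law must jointly block both under- and over-witnessing of $\existsBox$, the remaining steps being standard modal bookkeeping.
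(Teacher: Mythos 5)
Your proof is correct and, at the top level, follows the same route as the paper: soundness by direct semantic verification, then a canonical model over maximal $\AxA$-consistent sets, a Lindenbaum lemma, and a Truth Lemma proved by induction. The genuine difference lies in how the canonical relations treat an index $\chi$ at a world $\Gamma$ with $\existsBox\chi\notin\Gamma$, and there your construction is the more careful one. In the paper's model (Definition~\ref{propositional packed cannonical model}) an edge $(w,u)\in\hat{\R}_{i_{\existsBox\alpha}}$ requires $\existsBox\alpha\in w$, so at any $w$ with $\existsBox\alpha\notin w$ the index $i_{\existsBox\alpha}$ has no successors whatsoever; such a dead-end index vacuously verifies $\existsBox\beta$ at $w$ for \emph{every} $\beta$, which is exactly the ``spurious dead-end'' you set out to block, and it falsifies the left-to-right direction of the paper's Truth Lemma for $\existsBox$ (the paper's proof applies Lemma~\ref{e-diamond consistency}(2) to an arbitrary index $i_{\existsBox\gamma}$ without securing its hypothesis $\existsBox\gamma\in w$, and the claimed edge $w\xrightarrow{i_{\existsBox\gamma}}v$ simply does not exist when $\existsBox\gamma\notin w$). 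Your fallback clause --- when $\existsBox\chi\notin\Gamma$, let $R_\chi$ consist of all $\forallBox$-successors --- makes every non-activated index behave like the paper's universal index $j$, and your derived theorems $\forallBox\phi\implies\existsBox\phi$ and $\existsBox\bot\implies\existsBox\phi$ (both consequences of reading $A2$ as the distribution law $\forallBox(\phi\implies\psi)\implies(\existsBox\phi\implies\existsBox\psi)$) are precisely what is needed to show that such indices never witness an $\existsBox$-formula outside $\Gamma$. Modulo this repair the two arguments align closely: your subcase $\existsBox\chi\in\Gamma$ is the paper's Lemma~\ref{e-diamond consistency}(2) argument, and your subcase $\existsBox\chi\notin\Gamma$ matches the paper's treatment of the index $j$ via $\existsBox\top$. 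One shared piece of bookkeeping worth a line in either write-up: since $\forallDiamond$ abbreviates $\neg\existsBox\neg$, passing between $\neg\existsBox\phi$ and $\forallDiamond\neg\phi$ silently identifies $\phi$ with $\neg\neg\phi$ under the modality, which requires the (easily derivable) congruence rule for $\existsBox$.
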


We shall first prove soundness.

\begin{lemma}
\label{soundness}
The axiom system $\vdash_{\AxA}$ is sound for $\IQML$.
\end{lemma}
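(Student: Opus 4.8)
The plan is to establish soundness by the standard route: show that each axiom schema is valid in every $\IQML$ model and that each inference rule preserves validity. Since validity of a formula $\phi$ means that $\M, w \models \phi$ for every model $\M$ and every world $w$ (equivalently, $\neg\phi$ is unsatisfiable), I would verify the semantic validity of $A0$, $A1$, $A2$ pointwise, and check that (MP), ($\forallBox$Nec), ($\existsBox$Nec) take valid premises to valid conclusions. The crucial observation, which I would state up front, is that for a fixed index $i \in \I$ the operators $\Box_i$ and $\Diamond_i$ behave exactly like the box and diamond of ordinary normal modal logic over the relation $R_i$; this lets me reduce most reasoning to familiar single-modality arguments.

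First I would dispatch the easy cases: $A0$ holds because propositional tautologies are true under any truth assignment to their subformulas, and (MP) preserves truth at every pointed model hence preserves validity. For ($\forallBox$Nec), if $\phi$ is valid then $\M, u \models \phi$ for every world $u$ in every model, so in particular for any $i$ and any $R_i$-successor $u$ of $w$ we get $\M, u \models \phi$; by the semantic clause for $\forallBox$ this gives $\M, w \models \forallBox\phi$, and since $\M, w$ were arbitrary, $\forallBox\phi$ is valid. The ($\existsBox$Nec) rule is the one place where the non-emptiness of $\I$ (guaranteed by Definition~\ref{def: IQML structure}) is essential: from $\phi$ valid I pick any index $i \in \I$, note $\M, u \models \phi$ for all its successors $u$, and conclude $\M, w \models \existsBox\phi$ using that $i$ as the existential witness. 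I would flag explicitly that without a nonempty index set this rule would fail.

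For $A1$, I would argue directly from the $\forallBox$ clause. Fix $\M, w$ and suppose $\M, w \models \forallBox(\phi \implies \psi)$ and $\M, w \models \forallBox\phi$. Then for every $i \in \I$ and every $u$ with $(w,u) \in R_i$ we have both $\M, u \models \phi \implies \psi$ and $\M, u \models \phi$, whence $\M, u \models \psi$ by modus ponens at $u$; since $i$ and $u$ were arbitrary this yields $\M, w \models \forallBox\psi$, establishing the $K$-style distribution axiom. The main obstacle, and the step deserving the most care, is $A2$, since it governs the interaction between the universal modality $\forallBox$ and the \emph{existential} diamond $\forallDiamond$ (recall $\forallDiamond\phi = \neg\existsBox\neg\phi$). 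Here I would unfold $\forallDiamond\phi$ to its semantic meaning—there exist an index $i$ and a world $u$ with $(w,u)\in R_i$ and $\M,u\models\phi$—and then argue: given $\M,w\models\forallBox(\phi\implies\psi)$ and a witnessing pair $(i,u)$ for $\forallDiamond\phi$, the same pair witnesses $\forallDiamond\psi$ because $\M,u\models\phi\implies\psi$ (this instance of $\forallBox$ covers every index, in particular $i$) and $\M,u\models\phi$ together give $\M,u\models\psi$. The delicate point to get right is that the single index $i$ witnessing $\forallDiamond\phi$ must be reused for $\forallDiamond\psi$, which works precisely because $\forallBox$ quantifies universally over all indices and so constrains the relevant $R_i$-successor. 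Having verified all axioms and rules, soundness of $\vdash_{\AxA}$ follows by the usual induction on the length of derivations.
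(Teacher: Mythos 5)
Your overall strategy (verify each axiom pointwise, check that the rules preserve validity, flag the role of non-emptiness of $\I$ for ($\existsBox$Nec)) is exactly the paper's, and your treatment of $A0$, $A1$, (MP) and both necessitation rules is correct. But there is a genuine error precisely at the step you yourself single out as the delicate one: you unfold $\forallDiamond\phi = \neg\existsBox\neg\phi$ as ``there exist an index $i$ and a world $u$ with $(w,u)\in R_i$ and $\M,u\models\phi$.'' That is the semantics of $\existsDiamond\phi = \neg\forallBox\neg\phi$, not of $\forallDiamond\phi$. Negating the clause for $\existsBox\neg\phi$ (there is some $i$ such that \emph{all} $R_i$-successors satisfy $\neg\phi$) gives: for \emph{every} $i\in\I$ there is some $u$ with $w\xrightarrow{i}u$ and $\M,u\models\phi$. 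In this paper's naming convention the prefix $\forall/\exists$ tracks the quantifier over indices, so $\forallDiamond$ is $\forall i\,\exists u$, not $\exists i\,\exists u$.

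Because of this, your argument for $A2$ does not establish its conclusion. Under the correct semantics, a single witnessing pair $(i,u)$ extracted from $\forallDiamond\phi$ and pushed through $\forallBox(\phi\implies\psi)$ yields only $\existsDiamond\psi$; but $\forallDiamond\psi$ demands a $\psi$-successor for \emph{every} index $i\in\I$, and your proof produces just one. So as written you have verified the validity of $\forallBox(\phi\implies\psi)\implies(\forallDiamond\phi\implies\existsDiamond\psi)$, which is strictly weaker than $A2$. The repair is short and is exactly the paper's proof: fix an \emph{arbitrary} $i\in\I$; by $\forallDiamond\phi$ there is some $v$ with $w\xrightarrow{i}v$ and $\M,v\models\phi$; since $\forallBox(\phi\implies\psi)$ covers all indices, $\M,v\models\phi\implies\psi$, hence $\M,v\models\psi$; as $i$ was arbitrary, $\M,w\models\forallDiamond\psi$. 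The local reasoning at the witness is the same as yours; what is missing is wrapping it inside the universal quantification over indices rather than discharging $\forallDiamond$ with a single existential witness.
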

\begin{proof}
To see that $A2$ is a validity, for any model $\M$ and any world $w$ let $\M,w \models \forallBox (\phi\implies\psi)$ and $\M,w \models \forallDiamond \phi$. Since $\M,w\models \forallBox (\phi\implies\psi)$ for any $i\in \I$ and for any $w\xrightarrow{i}u$ we have $\M,u \models \phi \implies \psi$. Further since $\M,w \models \forallDiamond \phi$, for any $i\in \I$ there is some $v$ such that $w\xrightarrow{i}v$ and $\M,v \models \phi$. But then $\M,v \models \phi \implies \psi$ and hence $\M,v \models \psi$. Thus by semantics, $\M,w \models \forallDiamond \psi$.

Similarly validity of $A1$ which is the variant of standard $K$ axiom can be verified. Also notice that the inference rules $MP$ and both ($Nec$)  preserve validities. Hence $\vdash_{\AxA}$ is sound.
\end{proof}

For completeness, we first prove some useful lemmas. The notions of \emph{consistent set of formulas} and \emph{ maximally consistent set of formulas} is defined in the standard way.

\begin{lemma}
\label{e-diamond consistency}
For any set of formulas $\Gamma$, if $\Gamma$ is a maximal consistent set then
\begin{enumerate}
\item if $\existsDiamond \beta \in \Gamma$ then $\{ \beta\} \cup \{ \psi \mid \forallBox \psi\in \Gamma\}$ is consistent.
\item if $\{\forallDiamond \gamma,\ \existsBox \delta \} \subseteq \Gamma$ then $\{ \gamma, \delta\} \cup \{ \psi \mid \forallBox \psi\in \Gamma\}$ is consistent.
\end{enumerate}
\end{lemma}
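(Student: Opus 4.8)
The plan is to prove both statements by contradiction, using the standard technique of pushing consistency through the modal operators via the derived normality and distribution principles of the axiom system. The key observation is that I must reason about what the axioms say happens when I collect together all the formulas that are $\forallBox$-necessitated in $\Gamma$, since these are exactly the formulas guaranteed to hold at every successor along every accessibility relation.

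For part (1), suppose toward a contradiction that $\{\beta\} \cup \{\psi \mid \forallBox \psi \in \Gamma\}$ is inconsistent. Then by compactness there are finitely many $\psi_1, \dots, \psi_n$ with each $\forallBox \psi_j \in \Gamma$ such that $\vdash_{\AxA} (\psi_1 \land \cdots \land \psi_n) \implies \neg\beta$. I would then apply ($\forallBox$Nec) and repeatedly use the normality axiom $A1$ (the variant of $K$) together with propositional reasoning to derive $\vdash_{\AxA} (\forallBox \psi_1 \land \cdots \land \forallBox \psi_n) \implies \forallBox \neg\beta$. Since each $\forallBox \psi_j \in \Gamma$ and $\Gamma$ is closed under consequence (being maximal consistent), this yields $\forallBox \neg\beta \in \Gamma$. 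But $\forallBox \neg\beta$ is by definition $\neg \existsDiamond \beta$, contradicting $\existsDiamond \beta \in \Gamma$ together with consistency of $\Gamma$.

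For part (2), I would proceed analogously but now I must use axiom $A2$, which governs the interaction between $\forallBox$ and $\forallDiamond$. Assume $\{\gamma, \delta\} \cup \{\psi \mid \forallBox\psi \in \Gamma\}$ is inconsistent, so finitely many $\psi_j$ with $\forallBox\psi_j \in \Gamma$ give $\vdash_{\AxA} (\psi_1 \land \cdots \land \psi_n \land \gamma) \implies \neg\delta$, equivalently $(\psi_1 \land \cdots \land \psi_n) \implies (\gamma \implies \neg\delta)$. Necessitating and distributing via $A1$ as before produces $\forallBox(\gamma \implies \neg\delta) \in \Gamma$. Now the crucial step is to combine this with $\forallDiamond \gamma \in \Gamma$: applying $A2$ (instantiated with $\phi := \gamma$ and $\psi := \neg\delta$) gives $\forallDiamond \gamma \implies \forallDiamond \neg\delta$, whence $\forallDiamond \neg\delta \in \Gamma$. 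But $\forallDiamond \neg\delta$ is the dual $\neg \existsBox \delta$, contradicting $\existsBox \delta \in \Gamma$.

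The main obstacle, and the only place where genuine care is needed, is establishing the derived distribution rule that $\forallBox$ distributes over finite conjunctions in the necessitated form, i.e. deriving $\vdash_{\AxA} (\forallBox \chi_1 \land \cdots \land \forallBox \chi_n) \implies \forallBox(\chi_1 \land \cdots \land \chi_n)$ purely from $A1$, ($\forallBox$Nec), and $A0$. This is the familiar normal-modal-logic argument, but it must be carried out here for the $\forallBox$ operator alone, and one has to check that $A1$ as stated suffices. The remaining difference between the two parts lies entirely in which interaction axiom is invoked at the end — $A1$ alone for the $\existsDiamond$ case in part (1), and $A2$ for the mixed $\forallDiamond$/$\existsBox$ case in part (2) — so once the distribution lemma is in place, both contradictions follow by routine propositional manipulation and the maximality of $\Gamma$.
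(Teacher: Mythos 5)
Your proposal is correct and follows essentially the same route as the paper's own proof: contradiction via a finite inconsistent subset, ($\forallBox$Nec) plus axiom $A1$ to derive $(\forallBox\psi_1 \land \cdots \land \forallBox\psi_n) \implies \forallBox\neg\beta$ (resp. $\forallBox(\gamma \implies \neg\delta)$), and then axiom $A2$ with $\forallDiamond\gamma \in \Gamma$ for part (2), contradicting maximality of $\Gamma$ in each case. If anything, you are slightly more careful than the paper in insisting that the finitely many $\psi_j$ be drawn from $\{\psi \mid \forallBox\psi \in \Gamma\}$ (moving $\beta$, $\gamma$, $\delta$ to the consequent) and in flagging that the conjunction-distribution theorem for $\forallBox$ needs its own derivation from $A0$, $A1$, and ($\forallBox$Nec), which the paper merely asserts.
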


\begin{proof}
To prove ($1$),  let $\Gamma$ be a maximal consistent set of formulas and $\existsDiamond \beta \in \Gamma$. Define $\Lambda = \{ \beta	\} \cup \{ \psi \mid \forallBox \psi \in \Gamma \}$. We need to prove that $\Lambda$ is consistent. Suppose not, then there are some $\psi_1, \psi_2\cdots \psi_n \in \Lambda$ such that

\begin{tabular}{l l}
 &$\vdash_{\AxA}(\psi_1\land \psi_2 \cdots \psi_n) \implies \neg \beta$.\\
 By ($\forallBox$Nec) we have & $\vdash_{\AxA} \forallBox \big((\psi_1\land \psi_2 \cdots \psi_n) \implies \neg \beta\big)$.\\
By $(A1)$ and (MP),& $\vdash_{\AxA} \forallBox (\psi_1\land \psi_2 \cdots \psi_n ) \implies \forallBox \neg \beta$.
 \end{tabular}
 
\medskip
  Also note that $(\forallBox \psi_1 \land \forallBox \psi_2 \cdots\forallBox \psi_n) \implies \forallBox (\psi_1\land \psi_2\cdots \psi_n)$ is a theorem in this system. Hence $\vdash_{\AxA} (\forallBox \psi_1 \land \forallBox \psi_2 \cdots\implies \psi_n) \implies \forallBox \neg \beta$.
  This implies $\forallBox \neg \beta \in \Gamma$ which is a contradiction to $\existsDiamond \beta \in \Gamma$ and $\Gamma$ is maximally consistent.

  \medskip
To prove ($2$), again let $\Gamma$ be a maximal consistent set of formulas and let   $\{\forallDiamond \gamma,\ \existsBox \delta\} \subseteq \Gamma$. Define $\Lambda = \{ \gamma,\ \delta	\} \cup \{ \psi \mid \forallBox \psi \in \Gamma \}$. We need to prove that $\Lambda$ is consistent. Suppose not, then there are some $\psi_1, \psi_2\cdots \psi_n \in \Lambda$ such that
$\vdash_{\AxA}(\psi_1\land \psi_2 \cdots \psi_n) \Rightarrow (\gamma\Rightarrow \neg \delta)$.

Now arguing in the same way as in $(1)$ we have\\ 
\begin{tabular}{l l}
 &$\Gamma\vdash_{\AxA} \forallBox ( \gamma \Rightarrow \neg \delta)$\\
 By $(A2)$ & $\Gamma \vdash_{\AxA} \forallBox ( \gamma \Rightarrow \neg \delta) 
 \Rightarrow (\forallDiamond \gamma \Rightarrow \forallDiamond \neg \delta)$\\
 By $(MP)$ &$\Gamma \vdash_{\AxA} \forallDiamond \gamma \Rightarrow \forallDiamond \neg \delta$\\
 Since $\forallDiamond\beta \in \Gamma$,& $\Gamma \vdash_{\AxA} \forallDiamond \neg \delta$.
 \end{tabular}
 

  This is a contradiction since $\existsBox \delta \in \Gamma$ and $\Gamma$ is consistent.
\end{proof}

Now we define the canonical model. Let $EB = \{ \existsBox \alpha \mid \existsBox \alpha \in \IQML\}$ be the set of all $\existsBox$ formulas. These formulas will be used as ``agents" in the canonical model.

\begin{definition}
\label{propositional packed cannonical model}
The canonical model for the propositional packed model is given by $\hat{\M} = (\hat{\W}, \hat{\R}_{\hat{\I}}, \hat{\rho})$ where
\begin{itemize}
\item $\hat{\W}$ is set of all \emph{maximal consistent sets}
\item $\hat{\I} =  \{ i_\alpha \mid \alpha \in EB \} \cup \{ j\}$ where $j$ is distinct form all $i_\alpha$.
\item To define $\hat{\R}$, for all $w,u \in \hat{\W}$ and for all $i_{\existsBox\alpha} \in \I$   we have $(w,u) \in \R_{i_{\existsBox\alpha}}$ if $\existsBox \alpha \in w$ and $\{ \alpha \} \cup \{ \psi \mid \forallBox \psi \in w\} \subseteq u$.	

For $j\in \hat{\I}$ we have	$w \xrightarrow{j} u$ if $\{\psi \mid \forallBox \psi \in w\} \subseteq u$.
\item $\hat{\rho}(w) = w \cap \Ps$.
\end{itemize}
\end{definition}

\begin{lemma}
\label{exists witness}
In the canonical model, for any $w,u \in \hat{\W}$ and $i \in \hat{\I}$ if $w \xrightarrow{i} u \in \hat{\R}$ then for all $\psi \in u,\ \existsDiamond \psi \in w$.
\end{lemma}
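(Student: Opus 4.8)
The plan is to prove Lemma~\ref{exists witness} by unwinding the two cases in the definition of $\hat{\R}$ (Definition~\ref{propositional packed cannonical model}) and showing in each case that membership of $\psi$ in the successor world $u$ forces $\existsDiamond \psi$ into $w$. The key observation is that $\existsDiamond$ is the dual $\neg \forallBox \neg$, so proving $\existsDiamond \psi \in w$ is equivalent to proving $\forallBox \neg \psi \notin w$; since $w$ is a maximal consistent set, it suffices to show that assuming $\forallBox \neg \psi \in w$ leads to a contradiction with the edge condition that produced $w \xrightarrow{i} u$.

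First I would fix $w, u \in \hat{\W}$, an index $i \in \hat{\I}$ with $(w,u) \in \hat{\R}$, and a formula $\psi \in u$. Suppose toward a contradiction that $\existsDiamond \psi \notin w$; by maximal consistency this means $\forallBox \neg \psi \in w$. Now I split on the form of $i$. If $i = j$, then by definition $\{\chi \mid \forallBox \chi \in w\} \subseteq u$, so from $\forallBox \neg \psi \in w$ we get $\neg \psi \in u$, contradicting $\psi \in u$ together with the consistency of $u$. If $i = i_{\existsBox\alpha}$ for some $\existsBox \alpha \in EB$, then the edge condition again gives $\{\chi \mid \forallBox \chi \in w\} \subseteq u$ (the edge definition includes this set in $u$ as a subset), so the same step yields $\neg \psi \in u$ and the same contradiction. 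Hence $\existsDiamond \psi \in w$ in both cases.

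The routine but essential bookkeeping step is the passage ``$\forallBox \neg \psi \in w$ and $\{\chi \mid \forallBox \chi \in w\} \subseteq u$ imply $\neg \psi \in u$'': this is exactly the statement that $\neg \psi$ is one of the $\chi$'s with $\forallBox \chi \in w$, so it is copied into $u$ by the defining inclusion of the accessibility edge. Combined with $\psi \in u$, maximal consistency of $u$ gives the contradiction, since a maximal consistent set cannot contain both a formula and its negation.

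I do not expect any real obstacle here, as the argument is a direct dualization combined with the definition of $\hat{\R}$; the only point requiring care is to check that \emph{both} kinds of edges (the $i_{\existsBox\alpha}$ edges and the single $j$ edge) carry the inclusion $\{\chi \mid \forallBox \chi \in w\} \subseteq u$, so that the single uniform argument covers every $i \in \hat{\I}$. This is immediate from Definition~\ref{propositional packed cannonical model}, since both edge conditions explicitly list this set as a subset of $u$.
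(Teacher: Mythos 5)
Your proof is correct and follows essentially the same route as the paper's: assume $\existsDiamond \psi \notin w$, use maximal consistency to get $\forallBox \neg\psi \in w$, and then use the defining inclusion $\{\chi \mid \forallBox \chi \in w\} \subseteq u$ of the edge relation to place $\neg\psi$ in $u$, contradicting $\psi \in u$. The only difference is that you make the case split between $j$-edges and $i_{\existsBox\alpha}$-edges explicit, whereas the paper treats both uniformly under ``by definition of $\hat{\R}$''; this is a harmless (indeed clarifying) elaboration, not a different argument.
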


\begin{proof}
Suppose not, then there is some $w\xrightarrow{i}u$ and some $\psi\in \IQML$ such that $\psi\in u$ but $\existsDiamond \psi \not\in w$. But $w$ is maximal, so $\forallBox \neg \psi \in w$. Now since $w\xrightarrow{i}u$, by definition of $\hat{\R},\ \neg \psi\in u$ which is a contradiction.
\end{proof}

Now we are ready to prove that $\vdash_{\AxA}$ is complete for valid formulas of $\IQML$.

\begin{proof}
We show this by proving that any consistent formula $\phi\in \IQML$ is satisfiable. First note that any consistent set of formulas $\Gamma$ can be extended to a maximal consistent set by the standard Lindenbaum construction.

Hence for any consistent set of formulas $\Gamma$, there is some world $w \in \hat{\W}$  such that $\Gamma \subseteq w$. Now, we prove the truth lemma.

\begin{claim} For any $w\in \hat{\W},\ \hat{\M},w \models \phi$ iff $\phi \in w$.
\end{claim}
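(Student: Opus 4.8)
The plan is to prove the Truth Lemma by induction on the structure of $\phi$. The statement to establish is that for every $w \in \hat{\W}$, we have $\hat{\M}, w \models \phi$ if and only if $\phi \in w$. The base case ($\phi = p$ a proposition) follows immediately from the definition $\hat{\rho}(w) = w \cap \Ps$. The boolean cases ($\neg$ and $\land$) are routine and rely on $w$ being a maximal consistent set: negation uses that exactly one of $\psi, \neg\psi$ lies in $w$, and conjunction uses that $w$ is closed under conjunction and its components. The real content lies in the two modal cases, $\existsBox$ and $\forallBox$, which I treat via their duals as needed.

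For the $\forallBox$ case, I would argue both directions. For the forward direction, suppose $\forallBox \phi \in w$. Take any $i \in \hat{\I}$ and any $u$ with $w \xrightarrow{i} u$. By the definition of $\hat{\R}$, whether $i = j$ or $i = i_{\existsBox\alpha}$, the set $u$ contains $\{\psi \mid \forallBox \psi \in w\}$, so in particular $\phi \in u$. By the induction hypothesis $\hat{\M}, u \models \phi$, and since $i, u$ were arbitrary, $\hat{\M}, w \models \forallBox \phi$. For the converse, I argue contrapositively: if $\forallBox \phi \notin w$, then by maximality $\existsDiamond \neg\phi \in w$, and I must exhibit some $i$ and some $u$ with $w \xrightarrow{i} u$ and $\neg\phi \in u$. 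Here I invoke Lemma~\ref{e-diamond consistency}(1): the set $\{\neg\phi\} \cup \{\psi \mid \forallBox \psi \in w\}$ is consistent, hence extends by Lindenbaum to a maximal consistent $u$, and then $w \xrightarrow{j} u$ witnesses a successor with $\neg\phi \in u$, so by the induction hypothesis $\hat{\M}, w \not\models \forallBox \phi$.

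For the $\existsBox$ case, the forward direction is the delicate one. Suppose $\existsBox \phi \in w$. I must produce a single index $i \in \hat{\I}$ such that every $i$-successor of $w$ satisfies $\phi$. The natural choice is $i = i_{\existsBox\phi}$, the index manufactured precisely for this formula: by the definition of $\hat{\R}$, any $u$ with $w \xrightarrow{i_{\existsBox\phi}} u$ satisfies $\{\phi\} \cup \{\psi \mid \forallBox \psi \in w\} \subseteq u$, so $\phi \in u$ and the induction hypothesis gives $\hat{\M}, u \models \phi$; hence $\hat{\M}, w \models \existsBox \phi$. For the converse, suppose $\existsBox \phi \notin w$, so by maximality $\forallDiamond \neg\phi \in w$. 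I must show that for every index $i$, some $i$-successor falsifies $\phi$. This is where Lemma~\ref{e-diamond consistency}(2) does the work: I need that for each $i_{\existsBox\alpha}$ (and for $j$) there is a successor containing $\neg\phi$. Concretely, since $\forallDiamond \neg\phi \in w$ and $\existsBox\alpha \in w$, the set $\{\neg\phi, \alpha\} \cup \{\psi \mid \forallBox \psi \in w\}$ is consistent, extends to a maximal consistent $u$ with $w \xrightarrow{i_{\existsBox\alpha}} u$ and $\neg\phi \in u$; the index $j$ is handled by part~(1) applied to $\forallDiamond\neg\phi$. Thus no index makes $\existsBox \phi$ true, giving $\hat{\M}, w \not\models \existsBox \phi$.

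I expect the main obstacle to be the converse direction of the $\existsBox$ case, because refuting $\existsBox \phi$ requires a \emph{uniform} construction exhibiting a $\neg\phi$-successor along \emph{every} accessibility relation at once, including both the $j$-edge and each of the infinitely many $i_{\existsBox\alpha}$-edges. The consistency guarantees of Lemma~\ref{e-diamond consistency} are exactly calibrated for this: part~(1) secures the $j$-successor from $\forallDiamond\neg\phi$ alone, while part~(2) secures each $i_{\existsBox\alpha}$-successor by combining $\forallDiamond\neg\phi$ with the witness $\alpha$. Once this lemma is in hand the construction is mechanical, but it is worth stating carefully that every index $i \in \hat{\I}$ is covered, since the semantics of $\existsBox$ quantifies existentially over indices and so falsification demands that the $\neg\phi$-successor exist for each one. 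Finally, the theorem follows from the Truth Lemma in the usual way: a consistent $\phi$ extends to some $w \in \hat{\W}$ with $\phi \in w$, whence $\hat{\M}, w \models \phi$, establishing satisfiability and thus completeness.
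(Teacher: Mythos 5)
Your proof follows the paper's route almost exactly: the same canonical model, the same induction, the witness index $i_{\existsBox\phi}$ for the direction $\existsBox\phi \in w \Rightarrow \hat{\M},w \models \existsBox\phi$, and Lemma~\ref{e-diamond consistency} to manufacture successors in the contrapositive directions (your choice of $\forallBox$ rather than its dual $\existsDiamond$ as the second modal case is immaterial). The one gap specific to your write-up is the treatment of the index $j$ in the converse $\existsBox$ case. You say $j$ ``is handled by part~(1) applied to $\forallDiamond\neg\phi$'', but the hypothesis of part~(1) is $\existsDiamond\neg\phi \in w$, and what you have is $\forallDiamond\neg\phi \in w$. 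These are different formulas and neither propositionally implies the other: the implication $\forallDiamond\neg\phi \implies \existsDiamond\neg\phi$ (equivalently $\forallBox\psi \implies \existsBox\psi$) is valid only because $\I$ is required to be non-empty, and deriving it in $\AxA$ takes genuine work --- an application of ($\existsBox$Nec) together with an instance of $A2$ and the $K$-machinery --- so it cannot be cited silently. The paper sidesteps this entirely: by ($\existsBox$Nec) we have $\existsBox\top \in w$, and part~(2) applied to the pair $\forallDiamond\neg\beta, \existsBox\top \in w$ yields consistency of $\{\neg\beta,\top\} \cup \{\psi \mid \forallBox\psi \in w\}$, which gives the needed $j$-successor. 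Your step is repairable (either derive $\forallDiamond\psi \implies \existsDiamond\psi$ first, or use the paper's $\existsBox\top$ trick), but as written it misapplies the lemma.

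Separately, observe that your argument for the indices $i_{\existsBox\alpha}$ covers only those $\alpha$ with $\existsBox\alpha \in w$ --- you state this hypothesis explicitly, and it is needed both for part~(2) of the lemma and for the edge $w \xrightarrow{i_{\existsBox\alpha}} u$ to exist at all. But $\hat{\I}$ contains $i_{\existsBox\alpha}$ for \emph{every} $\existsBox$-formula $\alpha$ of the language. When $\existsBox\alpha \notin w$, the definition of $\hat{\R}$ gives $w$ no $i_{\existsBox\alpha}$-successors whatsoever, and then $\existsBox\phi$ holds at $w$ \emph{vacuously} via that index --- precisely what the contrapositive is supposed to exclude. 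This hole is not yours alone: the paper's own proof (``Consider any $i_{\existsBox\gamma} \in \hat{\I}$\dots'') silently assumes $\existsBox\gamma \in w$ in exactly the same way. Closing it requires adjusting the canonical model, e.g.\ stipulating that when $\existsBox\alpha \notin w$ the $i_{\existsBox\alpha}$-edges out of $w$ coincide with the $j$-edges; after that both your argument and the paper's go through. Since you yourself identify ``every index $i \in \hat{\I}$ must be covered'' as the crux, this uncovered family of indices is the case your proof (and the paper's) still owes.
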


The proof is by induction on the structure of $\phi$. In the base case we have propositions and the claim follows by definition of $\hat{\rho}$. The $\neg$ and $\land$ cases are standard.

For the case $\phi := \existsDiamond \beta$, 
suppose $\hat{\M}, w \models \existsDiamond \beta$ then there is some $a \in \hat\gamma(w)$ and some $(w,a,u)\in \hat{\R}$ such that $\hat{\M},u \models \beta$. By induction hypothesis $\beta \in u$ and by lemma \ref{exists witness}, $\existsDiamond \beta\in w$.

For the other direction, suppose $\existsDiamond \beta \in w$ then since $w$ is a consistent set (by lemma \ref{e-diamond consistency}(1)) we have 
$\Gamma = \{ \beta\} \cup \{ \psi \mid  \forallBox \psi \in w\}$ is consistent. Thus there is some world $ u \supseteq \Gamma$. Now since $\beta \in u$, by induction hypothesis $\hat{M}, u \models \beta$ and also since $\{ \psi \mid \forallBox \psi \in w\} \subseteq u$ we have $w \xrightarrow{j}u$ and hence 
$\hat{M}, w \models \existsDiamond \beta$.

\medskip

For the case $\phi := \existsBox \beta$, To prove $(\Rightarrow)$, we consider the contrapositive. We prove that  if $\existsBox \beta \not \in w$ then $\hat{M},w \models \forallDiamond \neg \beta$.  Let $\existsBox \beta \not \in w$. Since $w$ is maximally consistent $\forallDiamond \neg \beta \in w$.

Consider any $i_{\existsBox \gamma} \in \hat{\I}$.  Now by lemma \ref{e-diamond consistency}(2), $\Gamma = \{\neg \beta, \gamma \} \cup \{ \psi \mid \forallBox \psi \in w\}$ is consistent. Thus there is some world $v \supseteq \Gamma$ and by construction of the canonical model, $w \xrightarrow{i_{\existsBox \gamma}}v$. Also since $\neg \beta \in v$ by induction $\hat{M}, v \models \neg \beta$. \\
 For $j \in \gamma(w)$, let $\top$ be any validity. By ($\existsBox$Nec) we have $\vdash_{\AxA} \existsBox \top$ and hence $\existsBox \top \in w$. Again, by lemma \ref{e-diamond consistency}(2), $\Gamma = \{\neg \beta, \top\} \cup \{ \psi \mid \forallBox \psi \in w\}$ is consistent. Hence there is some $v \supseteq \Gamma$. And thus $w\xrightarrow{j}v$ and by induction hypothesis, $\hat{M},v \models \neg \beta$.
 \\
Thus for every $a\in \gamma(w)$ there is some $v$ such that $w\xrightarrow{a}v$ and $\hat{\M},v \models \neg \beta$. Hence $\hat{\M},w \models \forallDiamond \neg \beta$.

For the other direction first note that $\existsBox \beta \in EB$. Now suppose $\existsBox \beta \in w$ then by definition of the canonical model we have for any $w\xrightarrow{i_{\existsBox \beta}}u$ it is always the case that $\beta \in u$. By induction hypothesis, for any $w\xrightarrow{i_{\existsBox \beta}}u$ we have $\hat{\M},u \models \beta$. Hence $\hat{\M},w \models \existsBox \beta$.
\end{proof}

\section{$\IQML$ bisimulation and elementary equivalence}
\label{sec: IQML bisimulation and elementary equivalence}

Modal logics are naturally associated with bisimulations. If two  pointed models
are bisimilar, the related worlds agree on propositions and satisfy the so-called
``back and forth'' property (\cite{blueBook}).  However, when we come to $\PTML$, 
since the agent set is not fixed, we need to have the notion of `world 
bisimilarity' as well as `agent bisimilarity'. Towards this, in \cite{PR17}, we 
introduce a notion of bisimulation for propositional term modal logic and 
show that it preserves $\PTML$ formulas. Similar definitions  of bisimulations 
for  first order modal logics can be found in \cite{vanBis,Wang17}.  

Now we introduce the notion of bisimulation for $\IQML$. Here the idea is that 
two worlds are bisimilar if they agree on all propositions and every index in 
one structure has a {\em corresponding index} in the other. The following definition 
of bisimulation formalizes the notion of `corresponding index'.

\begin{definition}
\label{def: IQML bisimulation}
Given two $\IQML$ models $\M_1$ and $\M_2$, an $\IQML$-bisimulation on them is a 
non-empty relation $G \subseteq (\W_1 \times \W_2)$  such that 
for all $(w_1,w_2)\in G$ the following conditions hold:
\begin{itemize}

\item[]
\begin{itemize}
\item[Val.] $\rho_1(w_1) = \rho_2(w_2)$.
\item[$\existsBox$forth.] For all $i\in \I_1$ there is some $j\in \I_2$ such that, for all $u_2$ such that $w_2\xrightarrow{j}u_2$, there is some $u_1$ such that $w_1\xrightarrow{i} u_1$ and $(u_1,u_2)\in G$.
\item[$\existsBox$back.] For all $j\in \I_2$ there is some $i\in \I_1$ such that, for all $u_1$ such that $w_1\xrightarrow{i} u_1$, there is some $u_2$ such that $w_2\xrightarrow{j}u_2$ and $(u_1,u_2)\in G$.
\item[$\existsDiamond$forth.] For all $i \in \I_1$ and for all $u_1$ such that $w_1\xrightarrow{i} u_1$, there is some $j\in \I_2$ and  some $u_2$ such that $w_2\xrightarrow{j}u_2$ and $(u_1,u_2)\in G$.
\item[$\existsDiamond$back.] For all $j \in \I_2$ and for all $u_2$ such that $w_2\xrightarrow{j}u_2$, there is some $i\in \I_1$ and some $u_1$ such that $w_1\xrightarrow{i} u_1$ and $(u_1,u_2)\in G$.
\end{itemize}
\end{itemize}
\end{definition}

Given two models $\M_1$ and $\M_2$ we say that $w_1, w_2$ are $\IQML$ bisimilar 
if there is some $\IQML$ bisimulation $G$ on the models such that 
$(w_1,w_2) \in G$ and denote it $(\M_1,w_1) \bisEq (\M_2,w_2)$. Also, we say 
$(\M_1,w_1) \elemEq (\M_2,w_2)$ if they agree on all $\IQML$ formulas i.e, 
for all $\phi \in \IQML$,\ $\M_1,w_1 \models \phi$ iff $\M_2,w_2 \models \phi$.

\begin{theorem}
\label{thm: bisimulation preserves formula equivalence}
For any two models $\M_1$ and $\M_2$ and any $w_1 \in \W_1$ and $w_2 \in \W_2$,\\ if $\M_1,w_1 \bisEq \M_2,w_2$ then $\M_1,w_1 \elemEq \M_2,w_2$.
\end{theorem}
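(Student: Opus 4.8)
The plan is to prove that $\IQML$-bisimilarity implies elementary equivalence by structural induction on the formula $\phi$, showing simultaneously for every bisimulation $G$ and every pair $(w_1,w_2)\in G$ that $\M_1,w_1 \models \phi$ iff $\M_2,w_2 \models \phi$. The base case for propositions is immediate from the \textbf{Val.} clause, since $\rho_1(w_1)=\rho_2(w_2)$ guarantees $p \in \rho_1(w_1)$ iff $p \in \rho_2(w_2)$. The boolean cases $\neg\phi$ and $\phi \land \psi$ are routine: they follow directly from the induction hypothesis applied to the immediate subformulas, using the semantics in Definition~\ref{def: IQML semantics}.

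\medskip

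\noindent The substantive cases are the two modalities, and by duality it suffices to handle the $\existsBox$ and $\existsDiamond$ operators directly (the universal operators are their negation-duals, so those cases reduce to the existential ones via the already-established $\neg$ case). For $\phi := \existsBox \beta$, I would argue the forth direction: assume $\M_1,w_1 \models \existsBox \beta$, so there is some $i \in \I_1$ with $\M_1,u_1 \models \beta$ for every $u_1$ with $w_1 \xrightarrow{i} u_1$. By the \textbf{$\existsBox$forth} clause, this $i$ yields some $j \in \I_2$ such that every $j$-successor $u_2$ of $w_2$ has a $G$-related $i$-successor $u_1$ of $w_1$. Since that $u_1$ satisfies $\beta$ and $(u_1,u_2) \in G$, the induction hypothesis gives $\M_2,u_2 \models \beta$; as $u_2$ ranges over all $j$-successors, $\M_2,w_2 \models \existsBox \beta$. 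The back direction is symmetric, using \textbf{$\existsBox$back}. For $\phi := \existsDiamond \beta$, assuming $\M_1,w_1 \models \existsDiamond \beta$ gives some $i$ and some $u_1$ with $w_1 \xrightarrow{i} u_1$ and $\M_1,u_1 \models \beta$; the \textbf{$\existsDiamond$forth} clause supplies a matching $j$ and $u_2$ with $w_2 \xrightarrow{j} u_2$ and $(u_1,u_2) \in G$, whence the induction hypothesis yields $\M_2,u_2 \models \beta$ and therefore $\M_2,w_2 \models \existsDiamond \beta$; the converse uses \textbf{$\existsDiamond$back}.

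\medskip

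\noindent The key design point to verify is that the four \emph{distinct} back-and-forth clauses are each genuinely needed and correctly aligned with the quantifier alternation in the semantics. The subtlety specific to $\IQML$ is the nesting order: $\existsBox$ has the profile ``exists an index, for all successors,'' so its transfer requires a clause that first matches the index ($\exists j$) and then handles successors universally --- precisely the shape of \textbf{$\existsBox$forth}. By contrast $\existsDiamond$ has the profile ``exists an index, exists a successor,'' matched by the purely existential \textbf{$\existsDiamond$forth}. I expect the main obstacle to be confirming that the quantifier structure of each bisimulation clause exactly mirrors the intended modality and that no single clause can serve both modalities; mismatching them (for instance using an $\existsDiamond$-style clause for the $\existsBox$ case) would break the universal inner quantifier. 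Once the correspondence between each clause and its modality is pinned down, the induction closes uniformly, and elementary equivalence follows since $\phi$ was an arbitrary $\IQML$ formula.
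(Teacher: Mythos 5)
Your proposal is correct and takes essentially the same approach as the paper: structural induction over all $G$-related pairs, treating the $\existsBox$ and $\existsDiamond$ cases via the forth clauses (with the back clauses by symmetry) and recovering the universal modalities by duality through the negation case. The only cosmetic difference is that the paper argues the $\existsBox$ case by contradiction (assuming $\M_2,v_2 \models \Diamond_j \neg\psi$) whereas you argue it directly; both are sound.
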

\begin{proof}
Let $\M_1,w_1 \bisEq \M_2,w_2$ which means there is some $\IQML$ bisimulation $G$ such that $(w_1,w_2)\in G$. We need to show that for any $\phi \in \IQML$ we have $\M_1,w_1 \models \phi$ iff $\M_2,w_2 \models \phi$. 

We prove this for all $(v_1,v_2)\in G$ by induction on structure of $\phi$. The
base case and boolean cases are routine.

For the case $\phi := \existsBox \psi$:
Suppose $\M_1,v_1 \models \existsBox \psi$, we need to prove that $\M_2,v_2 \models 
\existsBox \psi$. Since $\M_1,v_1\models \existsBox \psi$, there is some $i\in \I_1$ 
such that $\M_1,v_1 \models \Box_i \psi$. Now let $j \in \I_2$ be the witness for 
$i$ for condition ($\existsBox$forth). We claim that $\M_2, v_2 \models \Box_j\psi$. 
Suppose not; then $\M_2,v_2 \models \Diamond_j \neg \psi$ and hence there is some 
$v_2 \xrightarrow{j} u_2$ such that $\M_2,u_2 \not\models \psi$. Since $j$ was 
the witness for $i$ for ($\existsBox$forth) condition, there is some 
$w_1 \xrightarrow{i}u_1$ such that $(u_1,u_2)\in G$. By induction hypothesis, 
$\M_1,u_1 \not\models \psi$ which contradicts $\M_1,u_1 \models \Box_i \psi$.
The other direction is proved symmetrically using ($\existsBox$back) condition.

\medskip
For the case $\existsDiamond \psi$: Suppose $\M_1,v_1 \models \existsDiamond \psi$ 
then there is some $i\in \I_1$ and some $u_1 \in W_1$ such that $v_1\xrightarrow{i}u_1$ 
and $\M_1,u_1\models \psi$. By condition ($\existsDiamond$forth) there is some 
$j\in \I_2$ and some $v_2\xrightarrow{j}u_2$ such that $(u_1,u_2)\in G$.  By 
induction hypothesis $M_2,u_2 \models \psi$ and hence $\M_2,v_2 \models 
\existsDiamond \psi$.  The other direction is symmetrically argued using 
$(\existsDiamond$back) condition.
\end{proof}

Now we prove that the converse holds over image finite models with finite index 
set $(\I)$.  $\M$ is said to be (index, image) finite if $\I$ is finite and 
$N^i(w)=\{u \mid (w,u) \in R_i\}$ is finite for all $w\in \W$ and $i\in \I$. 
 
 \begin{theorem}
 \label{thm:Formula equivalence corresponds to bisimulation over image finite models}
 Suppose $\M_1$ and $\M_2$ are (index,image) finite models then \\ $\M_1,w_1 \bisEq \M_2,w_2$ iff $\M_1,w_1 \elemEq \M_2,w_2$. 
 \end{theorem}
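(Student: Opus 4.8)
The plan is to prove the nontrivial direction, namely that over (index, image) finite models, elementary equivalence implies bisimilarity; the converse is already Theorem~\ref{thm: bisimulation preserves formula equivalence}. This is the $\IQML$ analogue of the classical Hennessy--Milner theorem, so I would follow the standard Hennessy--Milner strategy: take the relation
\[
G = \{ (v_1,v_2) \in \W_1 \times \W_2 \mid \M_1,v_1 \elemEq \M_2,v_2 \}
\]
of elementarily equivalent world pairs, observe that $(w_1,w_2)\in G$ by hypothesis, and verify that $G$ is an $\IQML$-bisimulation by checking each of the five clauses of Definition~\ref{def: IQML bisimulation}. The Val clause is immediate, since elementarily equivalent worlds in particular agree on all propositional atoms.

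The crux is checking the four modal clauses, and here finiteness is what makes each succeed. For the ($\existsDiamond$forth) clause, fix $i \in \I_1$ and a successor $v_1 \xrightarrow{i} u_1$. I would argue by contradiction: if no $\I_2$-successor $u_2$ of $v_2$ satisfied $(u_1,u_2)\in G$, then for every index $j$ and every $j$-successor $u_2$ there is a distinguishing formula $\psi_{u_2}$ with $\M_1,u_1 \models \psi_{u_2}$ but $\M_2,u_2 \not\models \psi_{u_2}$. Because $\I_2$ is finite and each successor set is finite, only finitely many such $u_2$ exist, so $\psi = \bigwedge_{u_2} \psi_{u_2}$ is a genuine $\IQML$ formula; then $\M_1,v_1 \models \existsDiamond \psi$ while $\M_2,v_2 \not\models \existsDiamond \psi$, contradicting $v_1 \elemEq v_2$. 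The ($\existsDiamond$back) clause is symmetric. The subtle cases are ($\existsBox$forth) and ($\existsBox$back), whose quantifier structure ``for all $i$ there is $j$ such that for all successors$\dots$'' is more nested. For ($\existsBox$forth), fix $i\in\I_1$; I expect to argue that the correct witness $j\in\I_2$ is one for which $\M_2,v_2 \models \Box_j \theta$ forces a matching $i$-successor, and to locate it using $\forallBox$ and $\existsBox$ formulas built from the finitely many candidate distinguishing formulas across the finitely many indices of $\I_2$. Concretely, if no $j$ worked, then for each $j$ one can find an $i$-successor $u_1^j$ of $v_1$ (itself drawn from the finite set $N^i(v_1)$) that is distinguished from every $j$-successor of $v_2$; collecting the finitely many distinguishing formulas yields a single $\IQML$ formula witnessing $\M_1,v_1\models\existsDiamond(\dots)$ type behaviour absent at $v_2$, again contradicting $v_1\elemEq v_2$.

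The main obstacle I anticipate is precisely the bookkeeping for ($\existsBox$forth)/($\existsBox$back), because the bisimulation clause alternates quantifiers in a way that does not correspond to a single modality but to the interaction of $\existsBox$ with its successors. The key technical engine is the standard finiteness lemma: whenever a world $u_1$ is elementarily \emph{in}equivalent to each member of a finite set $\{u_2^1,\dots,u_2^m\}$, there is a single formula $\chi$ with $\M_1,u_1\models\chi$ and $\M_2,u_2^k\not\models\chi$ for all $k$, obtained by conjoining the pairwise distinguishing formulas. I would state this as an auxiliary observation and invoke it uniformly across all four modal clauses, so that once finiteness of both $\I$ and every $N^i(w)$ is in hand, each clause reduces to exhibiting an appropriate $\existsDiamond$ or $\forallDiamond$ formula that separates $v_1$ from $v_2$. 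Having verified all five clauses, $G$ is an $\IQML$-bisimulation containing $(w_1,w_2)$, giving $\M_1,w_1 \bisEq \M_2,w_2$ and completing the proof.
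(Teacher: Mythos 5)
Your overall strategy coincides with the paper's: the same relation $G$ of elementarily equivalent pairs, the same clause-by-clause verification by contradiction, and the same use of finiteness of $\I_2$ and of the successor sets to conjoin distinguishing formulas. Your handling of the $\existsDiamond$ clauses is correct and essentially identical to the paper's argument.

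There is, however, a genuine gap in exactly the case you flag as the crux, ($\existsBox$forth), and symmetrically ($\existsBox$back): you negate the clause incorrectly. The clause says: for all $i\in\I_1$ there is $j\in\I_2$ such that \emph{every} $j$-successor $u_2$ of $v_2$ is matched by \emph{some} $i$-successor $u_1$ of $v_1$ with $(u_1,u_2)\in G$. Its negation is: there is $i\in\I_1$ such that for every $j\in\I_2$ there is a $j$-successor $u_j$ of $v_2$ inequivalent to \emph{all} $i$-successors of $v_1$. You instead assume that for each $j$ one can find an $i$-successor $u_1^j$ of $v_1$ distinguished from every $j$-successor of $v_2$; the inner quantifiers are swapped, and this statement is not implied by failure of the clause (the clause can fail via an unmatched successor on the $\M_2$ side even when every $i$-successor of $v_1$ has an equivalent $j$-successor of $v_2$). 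Deriving a contradiction from your statement therefore does not establish the clause. Relatedly, the separating formula cannot have the shape $\existsDiamond(\bigwedge\dots)$ you propose; the correct one is an $\existsBox$ formula whose matrix mirrors the quantifier alternation. In the paper's proof one takes, for each $j_l\in\I_2$ with bad successor $u_l$ and each $i$-successor $s_d$ of $v_1$, a formula $\phi^l_d$ with $\M_1,s_d\models\phi^l_d$ and $\M_2,u_l\not\models\phi^l_d$, and sets $\alpha = \existsBox\bigl(\bigwedge_{l}\bigvee_{d}\phi^l_d\bigr)$. Then $\M_1,v_1\models\alpha$ (witnessed by $i$, since each $s_d$ satisfies its own disjunct in every conjunct), while $\M_2,v_2\models\forallDiamond\bigl(\bigvee_{l}\bigwedge_{d}\neg\phi^l_d\bigr)$ (for each $j_l$ the successor $u_l$ refutes the matrix), contradicting $v_1\elemEq v_2$. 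With this correction of the negation and of the witnessing formula, the rest of your argument goes through.
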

 \begin{proof}
 $(\Rightarrow)$ follows from Theorem \ref{thm: bisimulation preserves formula equivalence}. \\
 For $(\Leftarrow)$ suppose $\M_1,w_1 \elemEq \M_2,w_2$, then define  $G = \{ (v_1,v_2) \mid \M_1,v_1 \elemEq \M_2,v_2\}$. Note that $(w_1,w_2) \in G$. Hence it suffices to show that $G$ is indeed an $\IQML$ bisimulation. For this, choose any $(v_1,v_2)\in G$. Clearly $[Val]$ holds since $v_1,v_2$ agree on all $\IQML$ propositions. Now we verify the other conditions:
 
Now suppose that the ($\existsBox$forth) condition does not hold. Then there is some $\bold{i}\in \I_1$ such that for all $j\in \I_2$ there is some $u_j$(*) such that $v_2\xrightarrow{j} u_j$ and  for all $v_1 \xrightarrow{\bold{i}} u'$ we have $u' \not\elemEq u_j$. Let $\I_2 = \{ j_1\cdots j_{n}\}$ and let $u_l$ be the corresponding (*) for every $j_l$.
 Also let $\bold{i}$-successors of $v_1$ be $N_{\bold{i}}(v_1) = \{ s_1\cdots s_{m}\}$. By above argument,  we have $u_l\not\elemEq s_d$ for all $l\le n$ and $d\le m$. Hence for every $u_l$ and every $s_d \in N_{\bold{i}}(v_1)$ there is a formula $\phi^l_d$ such that $\M_1,s_d \models \phi^l_d$ but $\M_2,u_l \models \neg \phi^l_d$. Now consider the formula $\alpha = \existsBox ( \bigwedge\limits_{l}\bigvee\limits_{d} \phi^l_d)$. Note that  for all $l$ and for all $\bold{i}$-successors $s_d \in N_{\bold{i}}(v_1)$ we have $\M_1,s_d \models  \phi^l_d$ and hence  $\M_1, v_1 \models \Box_{\bold{i}} (\bigwedge\limits_{l} \bigvee\limits_{d } \phi^l_d)$ which implies $\M_1,w_d \models \alpha$. On the other hand for every $j_l \in \I_2$ at $u_l$ we have $\M_2 ,u_l \models \bigwedge\limits_d\neg \phi^l_d$ and hence $\M_2,v_2 \models \forallDiamond (\bigvee\limits_l \bigwedge\limits_d \neg \phi^l_d)$ which contradicts $v_1 \elemEq v_2$.
 \\ The ($\existsBox$back) condition is argued symmetrically.

\medskip
Suppose that the ($\existsDiamond$back) condition does not hold.  Then there 
is some $\bold{j} \in \I_2$ and some $w_2 \xrightarrow{j}\bold{u_2}$ such that 
for all $i \in \I_1$ and for all $w_1 \xrightarrow{i} u'$ we have 
$u' \not\elemEq \bold{u_2}$. Let $\R = \bigcup_{i\in \I_1}\R_i$ and let 
$N(w_1) = \{ u' \mid (v_1,u') \in \R\}$ be the set of all successors of $w_1$. 
Since $\M_1$ is (index, image) finite, let $N(w_1) = \{t_1\cdots t_{r}\}$. By 
above argument, for every $t_d \in N(w_1)$ there is a formula $\psi_d$ such that 
$\M_1,t_d \models \psi_d$ and $\M_2,\bold{u_2} \models \neg \psi_d$. Hence 
$\M_2,w_2 \models \Diamond_{\bold{j}}(\bigwedge\limits_d \neg \psi_d)$. Now 
consider $\beta = \existsDiamond (\bigwedge\limits_d \neg \psi_d)$. Clearly 
$\M_2,w_2 \models \beta$ (with $\bold{j}$ and $\bold{u_2}$ as witnesses). On 
the other hand, for any successor $t_d$ of $w_1$ since $\M_1,t_d \models \psi_d$ 
we have $\M_1,w_1 \models \forallBox (\bigvee\limits_d \psi_d)$ which contradicts
our assumption that $w_1$ and $w_2$ satisfy the same formulas.\\
The ($\existsDiamond$forth) is argued symmetrically.

 \end{proof}
 
An important consequence of the theorem above is that we can confine ourselves
to tree models for $\IQML$ formulas, since it is easily seen that an $\IQML$ 
model is bisimilar to its tree unravelling.

Given a tree model $\M$ we define its {\sf restriction} to level $n$ in the
obvious manner: $\M|n$ is simply the same as $\M$ upto level $n$ and 
the remaining nodes in $\M$ are `thrown away'.

We can now sharpen the result above: we can define a notion of $n$-bisimilarity
and show that it preserves $\IQML$ formulas with modal depth at most $n$.

\begin{definition}
\label{def: $n$-bisimulation}
Given two tree models $\M_1$ and $\M_2$, and $w_1$ in $M_1$,
$w_2$ in $M_2$, we say $w_1$ and $w_2$ are $0$-bisimilar if
$\rho_1(w_1) = \rho_2(w_2)$.

For $n > 0$, we say $w_1$ and $w_2$ are $n$-bisimilar if
the following conditions hold:
\begin{itemize}
\item[]
\begin{itemize}
\item[]
\begin{itemize}
\item[$n$-$\existsBox$forth.] For all $i\in \I_1$ there is some $j\in \I_2$ such that for all $w_2\xrightarrow{j}u_2$ there is some $w_1\xrightarrow{i} u_1$ such that $u_1$ and $u_2$
are $(n-1)$-bisimilar.
\item[$n$-$\existsBox$back.] For all $j\in \I_2$ there is some $i\in \I_1$ such that for all $w_1\xrightarrow{i} u_1$ there is some $w_2\xrightarrow{j}u_2$ such that $u_1$ and $u_2$
are $(n-1)$-bisimilar.
\item[$n$-$\existsDiamond$forth.] For all $i \in \I_1$ and for all $w_1\xrightarrow{i} u_1$ there is some $j\in \I_2$ and  some $w_2\xrightarrow{j}u_2$ such that $u_1$ and $u_2$
are $(n-1)$-bisimilar.
\item[$n$-$\existsDiamond$back.] For all $j \in \I_2$ and for all $w_2\xrightarrow{j}u_2$ there is some $i\in \I_1$ and some $w_1\xrightarrow{i} u_1$ such that $u_1$ and $u_2$
are $(n-1)$-bisimilar.
\end{itemize}
\end{itemize}
\end{itemize}
\end{definition}

We can now speak of an $n$-bisimulation relation between models and speak of
models being $n$-bisimilar, and employ the notation $(\M_1,w_1) \bisEq_n (\M_2,w_2)$.
Clearly, for tree models $(\M_1,w_1) \bisEq_n (\M_2,w_2)$ iff
$(M_1 | n, w_1) \bisEq (M_2 | n, w_2)$.

A routine re-working of the proof  of Theorem \ref{thm: bisimulation preserves formula equivalence}  shows that when two tree models
are $n$-bisimilar, they satisfy the same formulas of modal depth at most $n$.
That is, $(\M_1,w_1) \bisEq_n (\M_2,w_2)$ we have $(\M_1,w_1) \elemEq^n (\M_2,w_2)$.
We can go further and show that every $n$-bisimulation class is represented
by a single formula of modal depth at most $n$. For this, we assume (as is
customary in modal logic), that we have only finitely many atomic propositions.

\begin{lemma} Suppose that $\Ps$ is a finite set, then for any $n$ and for any $\M,w$ there is a formula $\chi^n_{[\M,w]}\in \IQML$ of modal depth $n$ such that for any 
$(\M',w') \models \chi^n_{[\M,w]}$ iff $(\M',w') \bisEq_n (\M,w)$.
\end{lemma}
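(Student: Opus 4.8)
The plan is to construct the characteristic formula $\chi^n_{[\M,w]}$ by induction on $n$, following the standard Hennessy--Milner style construction, but adapted to the four-way back-and-forth conditions of $\IQML$-bisimulation. Since $\Ps$ is finite, there are only finitely many $\IQML$-formulas of modal depth at most $n$ up to logical equivalence; this finiteness is the engine that makes the conjunctions below well-defined.

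For the base case $n=0$, I would set $\chi^0_{[\M,w]} = \bigwedge_{p \in \rho(w)} p \;\land\; \bigwedge_{p \notin \rho(w)} \neg p$, which (as $\Ps$ is finite) is a finite formula of modal depth $0$ pinning down exactly the valuation at $w$. This captures $0$-bisimilarity by Definition \ref{def: $n$-bisimulation}.

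For the inductive step, assume characteristic formulas $\chi^{n-1}_{[\M',w']}$ exist for all pointed models. By the induction hypothesis the set $T = \{\, \chi^{n-1}_{[\M',w']} \mid (\M',w') \text{ any pointed model}\,\}$ is, up to equivalence, a \emph{finite} set of formulas of modal depth $n-1$; enumerate representatives as $\tau_1,\dots,\tau_k$. Each index $i \in \I$ of $\M$ induces a set of ``reachable types'' $S_i = \{\, \tau \in T \mid \exists u.\ w \xrightarrow{i} u \text{ and } u \models \tau\,\}$, and I would encode each of the four bisimulation clauses as a modal assertion over these types. Concretely I would take $\chi^n_{[\M,w]}$ to be the conjunction of $\chi^0_{[\M,w]}$ together with, for each $i \in \I$, a formula of the form $\existsDiamond$ or $\forallBox$ built from the $\tau$'s that simultaneously forces: every $i$-successor realises some realised type ($\forallBox(\bigvee_{\tau \in S_i}\tau)$-style), each realised type is $i$-reachable ($\existsDiamond \tau$-style), and the $\existsBox$/$\existsDiamond$ quantification over indices matches. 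The delicate point is that a single $\forallBox$ or $\existsBox$ quantifies over \emph{all} or \emph{some} index at once, so I cannot simply index these by $i$; instead I would use one $\existsBox$-clause per index $i$ to witness the existential-index direction ($\existsBox$forth demands, for each $i$, a matching $j$, which dually means $\M' \models \existsBox$ of the $i$-profile) and one global $\forallBox$-clause to cover the universal-index directions, carefully reading off the correct quantifier pattern from Definition \ref{def: $n$-bisimulation}.

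The main obstacle, and the step I would spend the most care on, is verifying the equivalence $(\M',w') \models \chi^n_{[\M,w]} \iff (\M',w') \bisEq_n (\M,w)$ in both directions, precisely because the $\existsBox$ and $\existsDiamond$ operators are \emph{implicitly} quantified over indices: the characteristic formula can only talk about indices through their induced ``successor-type profiles,'' not name them, so I must check that matching profiles is exactly what the four $n$-bisimulation clauses require. The $(\Leftarrow)$ direction follows from the preservation result already noted after Definition \ref{def: $n$-bisimulation} (that $\bisEq_n$ implies $\elemEq^n$), since $\chi^n_{[\M,w]}$ has modal depth $n$ and $\M,w \models \chi^n_{[\M,w]}$ by construction. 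For $(\Rightarrow)$ I would argue that satisfaction of $\chi^n_{[\M,w]}$ forces, for each index of $\M$, a corresponding index of $\M'$ realising the same set of $(n-1)$-types and conversely, which by the induction hypothesis (each type being characteristic of an $(n-1)$-bisimilarity class) yields exactly the witnesses required by the $n$-$\existsBox$forth/back and $n$-$\existsDiamond$forth/back conditions, establishing $n$-bisimilarity.
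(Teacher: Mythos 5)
Your base case and general plan (induction on $n$, finitely many $(n-1)$-types, clauses per bisimulation condition) match the paper's construction, and your per-index $\existsBox$ clauses, $\existsDiamond\tau$ clauses, and global $\forallBox$ clause correspond exactly to the paper's $n$-$\existsBox$forth, $n$-$\existsDiamond$forth and $n$-$\existsDiamond$back conjuncts. But there is a genuine gap: none of the clause shapes you allow yourself can enforce the $n$-$\existsBox$back condition, which demands that for \emph{every} index $j'$ of $\M'$ there is an index $i$ of $\M$ such that every $i$-successor of $w$ is $(n-1)$-bisimilar to some $j'$-successor of $w'$. Concretely, let $\M$ have a single index $1$ whose successors of $w$ realize two distinct $(n-1)$-types $\tau_a$ and $\tau_b$, and let $\M'$ have two indices: $1'$, whose successors of $w'$ realize only $\tau_a$, and $2'$, whose successors realize both $\tau_a$ and $\tau_b$. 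Then every clause you propose holds at $(\M',w')$: the per-index clause $\existsBox(\tau_a\lor\tau_b)$ (witness $1'$ or $2'$), the global clause $\forallBox(\tau_a\lor\tau_b)$, and $\existsDiamond\tau_a$, $\existsDiamond\tau_b$. Yet $\existsBox$back fails for $j'=1'$, since the only candidate $i=1$ has a $\tau_b$-successor with no match among the $1'$-successors; so your formula is satisfied by a model that is not $n$-bisimilar to $(\M,w)$, and the $(\Rightarrow)$ direction breaks. The paper's proof supplies precisely the missing device: a conjunction, over all subsets $S$ of the finite set $\Gamma^{n-1}_\M$ of characteristic formulas of $\M$'s worlds, of clauses $\existsBox(\veeS) \implies \bigvee_{i\in\I}\bigwedge_{(w,u)\in\R_i}\forallBox\bigl(\chi^{n-1}_{[\M,u]}\implies\veeS\bigr)$. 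The antecedent $\existsBox(\veeS)$ is how a variable-free formula ``gets hold of'' an arbitrary index $j'$ of $\M'$ (the global $\forallBox$ clause guarantees the realized type set of $j'$ is captured by some such $S$), and the consequent then produces the required index $i$ of $\M$.

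A second, related flaw is in your verification sketch: you claim satisfaction forces, for each index of $\M$, an index of $\M'$ ``realising the same set of $(n-1)$-types and conversely.'' That is strictly stronger than what the definition of $n$-bisimulation requires: the forth and back clauses only demand \emph{containments} of type sets (in opposite directions), and matched indices need not have identical profiles. For instance, if $\M$ has indices with successor-type sets $\{\tau_a\}$, $\{\tau_b\}$, $\{\tau_a,\tau_b\}$ while $\M'$ has indices with type sets $\{\tau_a\}$ and $\{\tau_b\}$ only, the two pointed models are $n$-bisimilar although no index of $\M'$ has profile $\{\tau_a,\tau_b\}$; a formula enforcing profile equality would be false there, breaking the $(\Leftarrow)$ direction. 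So the construction as written is too weak for $(\Rightarrow)$, while the repair you gesture at (same-profile matching) would be too strong for $(\Leftarrow)$; the subset-implication trick is what threads this needle, and it is the essential content of the paper's proof that your proposal is missing.
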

\begin{proof}
Note that $(\Leftarrow)$ follows from Theorem \ref{thm: bisimulation preserves formula equivalence} specialized to $n$-bisimulation.
For the other direction, the proof is by induction on $n$. For $n = 0$, since $\Ps$ is finite, 
$\chi^0_{[\M,w]} = \bigwedge\limits_{p\in \rho(w)} p\land 
\bigwedge\limits_{q \not\in \rho(w)}\neg q$ is the required formula. 

Let $\R = \bigcup \R_i$ and let $\Gamma^n_\M = \{ \chi^n_{[\M,w]} \mid w \in \W\}$. Inductively $\Gamma^n_\M$ is finite. For any $S \subseteq \Gamma^n_\M$ let $\veeS$ denote the disjunction $\bigvee\limits_{\phi \in S}S$.  For the induction step,  the characteristic formula is given by: 
{\footnotesize
\begin{align*}
   \chi^{n+1}_{[\M,w]} &= \overbrace{\chi^0_{[\M,w]}}^{Val.} \land \overbrace{ \bigwedge_{i\in \I} \existsBox \big( \bigvee\limits_{(w,u)\in \R_i} \chi^n_{[\M,u]}\big)}^{n-\existsBox \text{forth}}\land  
\overbrace{ \bigwedge\limits_{S \subseteq \Gamma^n_\M}\big( \existsBox (\veeS) \implies \bigvee\limits_{i\in \I} \bigwedge\limits_{(w,u)\in \R_i}\forallBox (\chi^n_{[\M,u]} \implies \veeS)\big)}^{n-\existsBox \text{back}} \\
&\qquad{} \underbrace{\bigwedge_{(w,u)\in \R} \existsDiamond \chi^n_{[M,u]}}_{n-\existsDiamond\text{forth}}\ \land\  
\underbrace{\forallBox \big(\bigvee_{(w,u)\in \R}  \chi^n_{[M,u]}\big)}_{n-\existsDiamond\text{back}}
      \end{align*}
}

Note that the formula remains finite even if $\I$ is infinite or the number 
of successors of $w$ is infinite since inductively there are only finitely 
many characteristic formulas of depth $n$. We now prove that the formula $\chi^n_{\M,w]}$ indeed captures $n$-bisimulation.  First we verify that the formula $\chi^{n}_{[\M,w]}$ holds at $\M,w$:
     
     \begin{itemize}
     \item      $\M,w \models \chi^0_{[\M,w]}$ follows from the definition of $\rho$. 
     \item For the $n-\existsBox$ forth part, for every $i\in \I$ we have $\M,w \models \Box_i \big( \bigvee\limits_{(w,u)\in \R_i} \chi^n_{[\M,u]}\big)$ and hence the claim follows.
     \item For the $n-\existsBox$back part, let $S \subseteq \Gamma^n_{\M}$. Suppose $\M,w \models \existsBox \veeS$, let $\bold{j}$ be the witness. Hence we have $\M,w \models \Box_{\bold{j}}\veeS$. Now observe that for all $(w,u)\in \R_{\bold{j}}$ we have $\chi^n_{[\M,u]} \in S$, otherwise there is some $(w,u)\in R_{\bold{j}}$ such that $\M,u \models \bigwedge\limits_{\phi \in S} \neg \phi$ which  is a contradiction to $\M,w \models \Box_{\bold{j}}\veeS$. 
     
     Also, note that for any finite set of formulas $T$, if $\alpha \in T$ then $\alpha \implies (\bigvee\limits_{\psi \in T}\psi)$ is a propositional validity. 
     
     Now we need to show that $\M,w \models \bigvee\limits_{i\in \I} \bigwedge\limits_{(w,u)\in \R_i}\forallBox (\chi^n_{[\M,u]} \implies \veeS)$. For this, pick $i = \bold{j}$. 
     By above argument, for all $(w,u) \in \R_{\bold{j}}$ we have $(\chi^n_{[\M,u]} \implies \veeS)$ as a boolean validity. 
     Hence, we have $\M,w \models \bigwedge\limits_{(w,u)\in \R_{\bold{j}}} \forallBox (\chi^n_{[\M,u]} \implies \veeS)$.

     \item For $n-\existsDiamond$forth, let $(w,u) \in \R$ which means for some $i\in \I$ we have $(w,u)\in R_i$ and such that $\M,u \models \chi^n_{[\M,u]}$. Hence $\M,w \models \existsDiamond \chi^n_{[\M,u]}$.
     \item For $n-\existsDiamond$back, for any $i\in \I$ and any $(w,u)\in R_i$ we have $\M,u \models \chi^n_{[\M,u]}$ and hence $\M,w \models \forallBox (\bigvee\limits_{(w,u)\in \R} \chi^n_{[\M,u]})$.
     \end{itemize}
   \medskip  
     Now  suppose $\M',w' \models \chi^{n}_{[\M,w]}$, then we need to prove that $\M',w'$ is $(n)$-bisimilar to $\M,w$. We verify all the conditions:
     \begin{itemize}
     \item Condition $(Val)$ follows since $\M',w' \models \chi^0_{[\M,w]}$.
     \item For condition $(n-\existsBox$forth), let $i\in \I$. By $(\existsBox$forth) part $\chi^{n}_{\M,w]}$ we have\\ $\M',w' \models \existsBox \big( \bigvee\limits_{(w,u)\in \R_i} \chi^n_{[\M,u]}\big)$. Let $\bold{j'}\in \I'$ be the witness such that $\M',w' \models \Box_{\bold{j'}} \big( \bigvee\limits_{(w,u)\in \R_i} \chi^n_{[\M,u]}\big)$. Now for any $(w',u')\in \R'_{\bold{j'}}$ there is some $(w,u) \in R_i$ such that $\M',u' \models \chi^n_{[\M,u]}$ where $(w,u)\in \R_i$ and by induction hypothesis, $(u,u')$ are $n$-bisimilar.
     \item For condition $(n-\existsBox$back),  let $i' \in \I'$.\\
      Define $S = \{\chi^{n-1}_{[\M,u]} \mid$ for some  $(w',u')\in \R_{i'}$ we have $\M',u' \models \chi^{n-1}_{[\M,u]}\}$. Now clearly, $\M',w'\models \existsBox (\veeS)$. Hence by ($n$-$\existsBox$back) part of the formula, there is some $\bold{i}\in I$ for which $ \M',w' \models \bigwedge\limits_{(w,u)\in \R_i} \forallBox (\chi^n_{[\M,u]} \implies \veeS)$. 
      
      Now, let $T = \{\chi^{n-1}_{[\M,u]} \mid$ for some  $(w,u)\in \R_{\bold{i}}\}$. Note that $T \subseteq S$ (otherwise, there is some $(w,u)\in R_{\bold{i}}$ such that $\M,u \models \chi^{n-1}_{[\M,u]} \land \bigwedge\limits_{\phi \in S} \neg \phi$ which implies $\M,w \models \neg \bigwedge\limits_{(w,u)\in \R_i} \forallBox (\chi^n_{[\M,u]} \implies \veeS)$ and this is a contradiction to $\M,w \models \chi^n_{[\M,w]}$).      
       Hence, for every $(w,u)\in \R_{\bold{i}}$ there is some $(w',u')\in \R){i'}$ such that $\M',u' \models \chi^{n-1}_{\M,u]}$. Thus  $\bold{i}$ is the ($\existsBox$back) witness for $i'$.
       
       \item For the $n-\existsDiamond$forth condition, let $i\in \I$ and $(w,u)\in \R_i$. By $n-\existsDiamond$forth part of the formula, $\M',w' \models \existsDiamond \chi^{n-1}_{[\M,u]}$ and hence we have a corresponding $i'\in \I'$ and $(w',u')\in \R_{i'}$ such that $\M',u' \models \chi^{n-1}_{[\M,u]}$.
       
       \item Finally for $n-\existsDiamond$back, suppose $i'\in \I$ and $(w',u')\in \R_{i'}$ then by $n-\existsDiamond$back part of the formula, $\M',u' \models \chi^{n-1}_{[\M,u]}$ for some $i\in \I$ and $(w,u)\in R_i$. Thus we obtain the required witness.
     \end{itemize}

\end{proof}

\section{Bisimulation games and invariance theorem}
\label{sec: bisimulation games and invariance}

Like every propositional modal logic, $\IQML$ is also a fragment of first order
logic. However, implicit quantification over domain elements in $\IQML$ needs
to be made explicit as well as quantification over worlds. Since these serve
different purposes in the semantics, we use a two sorted first order logic.

\begin{definition}[$\SFO$ syntax]
\label{def: SFO syntax} Let $\VarX$ and $\VarTau$ be two countable and disjoint 
sorts of variables and $R$ a ternary predicate. The  two sorted $\FO$  $(\SFO)$, 
corresponding to $\IQML$  is given by:

$$ \alpha ::= Q_p(x) \mid R(x,\tau,y) \mid \neg \alpha \mid \alpha \land \alpha \mid \exists \tau\ \alpha \mid \exists x \ \alpha$$
where $Q_p$ is the corresponding monadic predicate for every $p\in \Ps$ and $x,y \in \VarX$ and $\tau \in \VarTau$.
\end{definition}

A $\SFO$ structure is given by $\Mfo = [(\W,\I),(\hat{R},\hat{\rho})]$ where $(\W,\I)$ is the two sorted domain and $(\hat{R},\hat{\rho})$ are interpretations with $\hat{R} \subseteq (\W \times \I \times \W)$ and $\hat{\rho}: \W \mapsto 2^{Q_\Ps}$ where $Q_\Ps = \{ Q_p \mid p\in \Ps\}$.
The semantics $\Vdash$ is defined for $\SFO$ in the standard way where the variables in $\VarX$ range over the first sort ($\W$) and variables of $\VarTau$ range over second ($\I$). 

Given an $\IQML$ structure $\M = (\W,\R_\I,\rho)$ the  corresponding $\SFO$ structure is given by $\Mfo = [(\W,\I),(\hat{R},\hat{\rho})]$  where $(w,i,v)\in \hat{R}$ iff $(w,v) \in \R_i$ and $Q_p \in \hat{\rho}(w)$ iff $p\in \rho(w)$. Similarly given any $\SFO$ structure, it can be interpreted as an $\IQML$ structure.  Thus there is a natural correspondence between $\IQML$ structures and $\SFO$ structures. For any $\IQML$ structure $\M$ let the corresponding $\SFO$ structure be denoted by $\Mfo$.

\begin{definition}[$\IQML$ to $\SFO$ translation]
\label{def: IQML to SFO translation}
The translation of $\phi \in \IQML$ into a $\SFO$ parametrized by $x\in \VarX$ is given by:

\begin{tabular}{l}
$\Tr( p\ :x) = Q_p(x)$ \\
$\Tr(\neg \phi\ : x) = \neg \Tr(\phi\ :x)$ \\
$\Tr(\phi \land \psi:\ x) = \Tr(\phi:\ x) \land \Tr(\psi\ :x)$\\
$\Tr(\existsBox \phi:\ x) = \exists \tau \forall y\ (R(x,\tau,y) \implies \Tr(\phi:\ y))$\\
$\Tr(\forallBox \phi:\ x)  = \forall \tau \forall y\ (R(x,\tau,y) \implies \Tr(\phi:\ y))$
\end{tabular}
\end{definition}

\begin{proposition}
\label{prop: bTML to SFO preservation}
For any formula $\phi \in \IQML$ and any $\IQML$ structure $\M$\\
$\M,w \models \phi$ iff $\Mfo,[x\mapsto w] \Vdash \Tr(\phi:\ x)$.
\end{proposition}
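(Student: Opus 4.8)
The plan is to prove the statement by structural induction on $\phi$, with the induction hypothesis generalized over the choice of parameter variable: I will show that for every $\phi\in\IQML$, every $\IQML$ structure $\M$, every world $w\in\W$, and every variable $x\in\VarX$, we have $\M,w\models\phi$ iff $\Mfo,[x\mapsto w]\Vdash\Tr(\phi:\ x)$. Generalizing over $x$ is what makes the induction go through, since the modal clauses of $\Tr$ recurse on a fresh variable $y$ rather than on $x$ itself.

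For the base case $\phi = p$, the translation gives $\Tr(p:\ x) = Q_p(x)$, and I would simply unfold both sides: $\M,w\models p$ holds iff $p\in\rho(w)$, which by the defining correspondence between $\M$ and $\Mfo$ is equivalent to $Q_p\in\hat{\rho}(w)$, i.e. to $\Mfo,[x\mapsto w]\Vdash Q_p(x)$. The boolean cases $\neg\phi$ and $\phi\land\psi$ are immediate: since $\Tr$ commutes with $\neg$ and $\land$ and the $\SFO$ semantics interprets these connectives classically, the equivalences follow directly from the induction hypothesis applied to the immediate subformulas (with the same parameter $x$).

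The modal cases are the heart of the argument. For $\phi = \existsBox\psi$, recall $\Tr(\existsBox\psi:\ x) = \exists\tau\,\forall y\,(R(x,\tau,y)\implies\Tr(\psi:\ y))$. I would unfold the $\SFO$ semantics of the right-hand side: it holds at $[x\mapsto w]$ iff there exists an index $i\in\I$ such that for every $u\in\W$ with $(w,i,u)\in\hat{R}$ we have $\Mfo,[x\mapsto w,\ \tau\mapsto i,\ y\mapsto u]\Vdash\Tr(\psi:\ y)$. Using the correspondence $(w,i,u)\in\hat{R} \iff (w,u)\in\R_i$ together with the generalized induction hypothesis applied to $\psi$ at parameter $y$, this condition reduces exactly to ``there is $i\in\I$ such that for all $u$ with $(w,u)\in\R_i$ we have $\M,u\models\psi$'', which is the semantic clause for $\M,w\models\existsBox\psi$. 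The case $\phi=\forallBox\psi$ is identical except that $\tau$ is universally quantified, matching the universal quantification over indices in the clause for $\forallBox$.

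The only point requiring care, and the step I would treat as the main (minor) obstacle, is the variable bookkeeping: the induction hypothesis must be invoked at the fresh variable $y$ introduced by the translation, and one must check that $\tau,y$ do not clash with $x$ or with any variable already in scope. Since $\IQML$ is variable-free, the free variables of the translated formula are controlled entirely by the single parameter, and choosing $\tau,y$ fresh at each step keeps the assignment consistent; this is precisely why the induction is stated for an arbitrary parameter variable rather than a fixed one. No further complications arise, so the remaining verifications are routine.
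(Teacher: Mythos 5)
Your proof is correct: the paper states this proposition without any proof, treating it as the routine correctness of the standard translation, and your structural induction with the hypothesis generalized over the parameter variable (so that the modal cases can be invoked at the fresh variable $y$) is exactly the argument the paper implicitly relies on. The attention to the correspondence $(w,i,u)\in\hat{R}\iff(w,u)\in\R_i$ and to freshness of $\tau,y$ is the right bookkeeping, and nothing further is needed.
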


Hence $\IQML$ can be translated into $\SFO$ with $2$ variables of $\VarX$ sort and one variable of $\VarTau$ sort. Given two $\IQML$ models $\M_1$ and $\M_2$, the notion of $\IQML$ bisimulation  naturally translates to bisimulation over the corresponding $\SFO$ models $\Mfo_1$ and $\Mfo_2$.

Now we state the van Benthem type characterization theorem: bisimulation invariant
$\SFO$ formulas can be translated back into $\IQML$.  We say that $\alpha(x) \in \SFO$ is \emph{bisimulation invariant} if for all $\M_1,w_1 \bisEq \M_2,w_2$ we have $\Mfo_1,[x\mapsto w_1] \Vdash\alpha(x)$ iff $\Mfo_2,[x\mapsto w_2] \Vdash \alpha(x)$. We can similarly
speak of $\alpha(x)$ being $n$-bisimulation invariant as well.  Also, $\alpha(x)$ is equivalent to some $\IQML$ formula if there is some formula $\phi \in \IQML$ such that  for all $\M$ we have $\Mfo,[x\mapsto w] \Vdash \alpha(x)$ iff $\M,w \models \phi$.  

\begin{theorem}
\label{thm: vanBenthem like characterization}
Let $\alpha(x) \in \SFO$ with one free variable $x\in \VarX$. Then $\alpha(x)$ is bisimulation invariant iff $\alpha(x)$ is equivalent to some $\IQML$ formula.
\end{theorem}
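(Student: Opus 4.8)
The plan is to prove the two directions separately. The direction $(\Leftarrow)$ is immediate: if $\alpha(x)$ is equivalent to some $\phi\in\IQML$, then by Proposition~\ref{prop: bTML to SFO preservation} $\alpha(x)$ agrees with $\Tr(\phi:\ x)$ on every structure, and by Theorem~\ref{thm: bisimulation preserves formula equivalence} the truth of $\phi$ is preserved across bisimilar worlds; hence $\alpha(x)$ is bisimulation invariant.

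For the hard direction $(\Rightarrow)$, I would assume $\alpha(x)$ is bisimulation invariant and let $k$ be its quantifier rank (counting quantifiers of both sorts). The strategy has two stages. First, I would reduce global bisimulation invariance to $n$-bisimulation invariance for a finite $n$ bounded in terms of $k$: the claim is that whenever $(\M_1,w_1)\bisEq_n(\M_2,w_2)$ for a suitable $n$, the two structures agree on $\alpha$. Granting this, the second stage is short. Fix that $n$. Since $\Ps$ is finite, the characteristic-formula lemma gives finitely many formulas $\chi^n_{[\M,w]}$, one per $n$-bisimulation class, each of modal depth $n$. Let $D$ be the finite set of those classes whose representatives satisfy $\alpha$, and put $\phi=\bigvee_{[\M,w]\in D}\chi^n_{[\M,w]}$. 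By the defining property of the characteristic formulas together with $n$-bisimulation invariance of $\alpha$, we get $\Mfo,[x\mapsto w]\Vdash\alpha(x)$ iff $(\M,w)\bisEq_n(\M',w')$ for some representative in $D$ iff $\M,w\models\phi$; so $\alpha(x)$ is equivalent to the $\IQML$ formula $\phi$.

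The heart of the argument, and the main obstacle, is the first stage: showing that bisimulation invariance forces invariance under $n$-bisimilarity for a fixed finite $n$. I see two routes. The model-theoretic route uses compactness together with $\omega$-saturation: one shows, by a Hennessy--Milner style argument, that over $\omega$-saturated $\SFO$ structures $\IQML$-equivalence ($\elemEq$) coincides with $\IQML$-bisimilarity ($\bisEq$); then, given any $\M_1,w_1\elemEq\M_2,w_2$, one passes to $\omega$-saturated elementary extensions, obtains a genuine bisimulation there, and transfers $\alpha$ across it using invariance and the fact that elementary extensions preserve the first-order formula $\alpha$. This shows $\alpha$ is determined by the $\IQML$-theory of a world, and compactness then compresses that theory to a single finite modal depth. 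The combinatorial route instead works directly with the $n$-bisimulation conditions introduced above: using that every model is bisimilar to its tree unravelling, I would replace $\M_1,\M_2$ by trees and then show that $n$-bisimilarity, for $n$ large enough relative to $k$, lets Duplicator win the $k$-round two-sorted Ehrenfeucht--Fra\"iss\'e game on $\Mfo_1,\Mfo_2$, whence they agree on all $\SFO$ formulas of rank $\le k$, in particular on $\alpha$.

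I expect the delicate point in the combinatorial route to be the mismatch between the single successor chosen in an EF move and the whole index-class of successors that an $\IQML$ modality quantifies over: the sort $\VarTau$ quantifies over indices while $\VarX$ quantifies over worlds, so a faithful simulation of a $k$-round game needs the tree unravellings to carry enough multiplicity of $i$-successors and of indices. I would therefore first pass to a padded tree unravelling in which each index and each successor is duplicated sufficiently many (a function of $k$) times; this leaves the bisimulation class unchanged, yet guarantees that Duplicator can always answer an index-move or a world-move with a fresh matching witness. With the padding in place the game argument becomes routine, and the resulting choice of $n$ (after the multiplicity is absorbed by the padding) makes the first stage, and hence the theorem, go through.
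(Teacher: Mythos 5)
Your $(\Leftarrow)$ direction and your second stage --- taking $\phi$ to be the disjunction of the characteristic formulas $\chi^n_{[\M,w]}$ over the $n$-bisimulation classes containing a model of $\alpha$ --- are exactly the paper's argument. The gap is in the first stage, in the combinatorial route that you elaborate and ultimately rely on. The claim that, after passing to (padded) tree unravellings, $n$-bisimilarity for $n$ large relative to $k$ lets Duplicator win the $k$-round two-sorted Ehrenfeucht--Fra\"iss\'e game on $\Mfo_1,\Mfo_2$ --- and hence that the two structures agree on \emph{all} $\SFO$ formulas of rank $\le k$ --- is false, and the padding you propose (duplicating indices and successors) cannot repair it. Concretely, let $\M_1$ be a single path of length $N \gg n$ and $\M_2$ an infinite path, each with one index and empty valuation. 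Their roots are $n$-bisimilar, and duplicating successors and indices preserves this; but $\M_1$ contains a dead-end world while $\M_2$ does not, so the sentence $\exists y\, \neg \exists \tau \exists z\, R(y,\tau,z)$, of quantifier rank $3$, separates them: Spoiler wins in three rounds by pebbling the dead end, whatever the multiplicities. The obstruction is not insufficient multiplicity of witnesses, which is what your padding addresses, but the part of the structures lying beyond depth $n$: first-order quantifiers reach it directly, while $n$-bisimilarity constrains nothing there. (This does not contradict the theorem, since the separating sentence is not bisimulation invariant; it refutes your intermediate lemma, which quantifies over all rank-$k$ formulas.)

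The paper's proof is arranged precisely to avoid this. It never plays the game between $\Mfo_1$ and $\Mfo_2$; instead it proves a \emph{locality} lemma (Lemma~\ref{lemma: q. rank q implies 2^q local}): a bisimulation-invariant $\alpha(x)$ of total quantifier rank $q$ satisfies $\Mfo \Vdash \alpha(w)$ iff $\Mfo|n \Vdash \alpha(w)$ for $n = 2^q$. The game there is played between $\mathfrak{N}\uplus\Mfo$ and $\mathfrak{N}\uplus\Mfo|n$, where the padding $\mathfrak{N}$ consists of $q$ disjoint copies of \emph{both} $\Mfo$ \emph{and} $\Mfo|n$; since each side then contains copies of the truncated as well as the untruncated structure, a far-away move like the dead end above is answered inside a fresh copy on the other side, and bisimulation invariance of $\alpha$ justifies adding and removing $\mathfrak{N}$, because disjoint unions do not change the pointed bisimulation type. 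With locality in hand, $n$-bisimulation invariance needs no further game: for trees, $n$-bisimilarity of $(\M_1,w_1)$ and $(\M_2,w_2)$ gives \emph{full} bisimilarity of the truncations, so $\Mfo_1 \Vdash \alpha(w_1)$ iff $\Mfo_1|n \Vdash \alpha(w_1)$ iff $\Mfo_2|n \Vdash \alpha(w_2)$ iff $\Mfo_2 \Vdash \alpha(w_2)$, the middle step by invariance. Your alternative model-theoretic route ($\omega$-saturated elementary extensions, a Hennessy--Milner argument, then compactness) is the classical van Benthem proof and could be made to work here, but it is a genuinely different argument from the paper's, and its crux --- that over $\omega$-saturated two-sorted structures $\elemEq$ implies $\bisEq$, i.e.\ Theorem~\ref{thm:Formula equivalence corresponds to bisimulation over image finite models} with saturation in place of (index, image)-finiteness --- is left unproved in your sketch; the alternating $\forall\exists\forall\exists$ shape of the $\existsBox$forth and $\existsBox$back clauses makes that step require two nested saturation arguments rather than a routine transcription.
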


Note that $ \Leftarrow $ follows from Theorem \ref{thm: bisimulation preserves formula equivalence}. 
To prove $(\Rightarrow)$ it suffices to show that if $\alpha(x)$ is bisimulation invariant then, for some $n$ it is $n$-bisimulation invariant, since we have already shown in the
last section that $n$-bisimulation classes are defined by $\IQML$ formulas.

Towards proving this, we introduce a notion of {\sf locality} for $\SFO$ formulas. For any tree model $\M$ and  let $\Mfo|n$ be the corresponding $\SFO$ model of $\M$ restricted to $n$ depth.

\begin{definition}
We say that a formula $\alpha(x)$ is $n$-local if for any tree model $(\M,w)$,
$\Mfo \Vdash \alpha(w)$ iff $\Mfo | n \Vdash \alpha(w)$.
\end{definition}

\begin{lemma}
\label{lemma: q. rank q implies 2^q local}
For any $\alpha(x) \in \SFO$  formula which is bisimulation invariant with $x \in \VarX$ then $\alpha(x)$ is $n$-local 
for $n=2^q$ where $q = q_x + q_\tau$ where $q_x$ is the quantifier rank of $\VarX$ 
sort in $\alpha(x)$ and $q_\tau$ is the quantifier rank of $\VarTau$ in $\alpha(x)$.
\end{lemma}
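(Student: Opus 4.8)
The plan is to derive $n$-locality from an Ehrenfeucht--Fra\"{\i}ss\'e (EF) analysis, using the assumed bisimulation invariance of $\alpha$ twice to replace $\M$ and $\M|n$ by well-behaved companions. Concretely, it suffices to produce two $\IQML$ models $\N,\N'$ (with their $\SFO$ renderings, via the correspondence of Definition~\ref{def: IQML to SFO translation}) such that $\N,w \bisEq \M,w$ and $\N',w \bisEq \M|n,w$, and such that Duplicator wins the $q$-round two-sorted EF game on $(\N,w)$ and $(\N',w)$, where the game grants $q_x$ moves on the $\W$-sort and $q_\tau$ moves on the $\I$-sort. Granting this, $\alpha(x)$, having exactly these quantifier ranks, cannot separate the two $q$-equivalent structures; combining this with the bisimulation invariance of $\alpha$ (once for $\M,\N$ and once for $\M|n,\N'$) yields $\Mfo \Vdash \alpha(w)$ iff $\N \Vdash \alpha(w)$ iff $\N' \Vdash \alpha(w)$ iff $\Mfo|n \Vdash \alpha(w)$, which is precisely $n$-locality.

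First I would build the companions. Unravelling turns $\M$ into a tree, and the key operation is $\omega$-padding: replace each node's family of child-subtrees by $\omega$ isomorphic copies of the padded form of each. This produces $\N,w \bisEq \M,w$, and since padding is a canonical form for bisimilarity, bisimilar subtrees become genuinely isomorphic; in particular $\N$ and $\N'$ are isomorphic on their restrictions to depth $n$, because $\M$ and $\M|n$ coincide up to depth $n$. Using that a disjoint component unreachable from $w$ is related by no bisimulation containing $(w,\cdot)$ and hence preserves $\bisEq$ at $w$, I would then enrich both $\N$ and $\N'$ by disjoint copies of a fixed saturating gadget, so that the two structures realise exactly the same local neighbourhood types, each with infinite multiplicity. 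This enrichment is where the global bisimulation invariance of $\alpha$ is essential: it is exactly what licenses discarding the `deep' part of $\M$ beyond depth $n$, whose neighbourhood types are otherwise absent from $\M|n$.

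Finally I would run the game in the two-sorted Gaifman graph, in which $R(u,i,v)$ links $u$ to $i$ and $i$ to $v$, so that one modal step has Gaifman length $2$. Duplicator maintains the standard distance-halving invariant: after round $k$ the pebbled tuples induce a partial isomorphism preserving all Gaifman distances below the threshold $2^{q-k}$, with initial threshold $2^{q}$. A move landing within the current threshold of an existing pebble is answered by copying the local isomorphism, available because $\N$ and $\N'$ agree isomorphically out to depth $n=2^{q}$ (Gaifman radius $2^{q+1}>2^{q}$); a move landing far from all pebbles is answered by a fresh node of the same sort, valuation and local type, available by the infinite-multiplicity saturation. Splitting the budget across the two sorts accounts for $q=q_x+q_\tau$ and delivers the radius $2^{q}$.

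The main obstacle is exactly the far-move step. A node of $\M$ deeper than $n$ realises a radius-$2^{q-1}$ neighbourhood type that need not occur anywhere in $\M|n$, so a naive EF game between $\M^{\omega}$ and $(\M|n)^{\omega}$ is not winnable. The resolution is to let bisimulation invariance absorb such types into \emph{unreachable} saturating components of both companions, after which a Hanf-style census argument makes every far move answerable; carrying out the two-sorted distance bookkeeping so that it lands precisely on the stated bound $n=2^{q}$ is the delicate part.
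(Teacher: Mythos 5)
Your high-level skeleton coincides with the paper's: saturate both $\M$ and $\M|n$ with disjoint, unreachable components realizing the types of both structures (the paper adds to \emph{each} side a structure $\mathfrak{N}$ consisting of $q$ disjoint copies of $\Mfo$ and of $\Mfo|n$), invoke bisimulation invariance of $\alpha$ twice to pass to these enriched structures, and then win the $(q_x,q_\tau)$-round two-sorted EF game by a distance-halving strategy in which near moves are answered inside local isomorphisms and far moves are answered in fresh copies. Your unravelling/$\omega$-padding step is harmless but unnecessary: $\M$ and $\M|n$ literally coincide (not merely up to isomorphism) up to depth $n$, which is all that near-move responses require.

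The genuine gap is your choice of metric. In the Gaifman graph of the two-sorted structure, an index element $i$ is a vertex adjacent to every world incident to an $i$-labelled edge, so it acts as a hub: whenever $(u_1,i,v_1)\in R$ and $(u_2,i,v_2)\in R$ we get $d_G(u_1,u_2)\le 2$ regardless of tree depth. Consequently your claim that the companions (call them $\mathcal{N}$ for $\M$ and $\mathcal{N}'$ for $\M|n$) ``agree isomorphically out to Gaifman radius $2^{q+1}$'' is false. Take $\M$ to be the path $w\xrightarrow{i}u_1\xrightarrow{i}\cdots\xrightarrow{i}u_N$ with $N\gg 2^q$, all edges carrying the \emph{same} index $i$, and a proposition $p$ true only at $u_N$. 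If the saturating components carry their own index elements, then the Gaifman ball of radius $2$ around $w$ in $\mathcal{N}$ already contains the $p$-world $u_N$ (tree depth $N>n$), while the corresponding ball in $\mathcal{N}'$ contains no $p$-world, so they are not isomorphic; if instead index elements are shared across components, then every world lies within Gaifman distance $2$ of every other, the threshold balls are the whole structures, and the required ``local isomorphism'' would have to map the outgoing world-path of length $N$ from the pebbled $w$ onto an outgoing world-path from $w$ in $\mathcal{N}'$, which has length at most $n$ --- again impossible. Either way, Spoiler's move on $u_N$ is \emph{near} in your metric, and the prescribed response (copy a local isomorphism) does not exist, so the invariant cannot be maintained. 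The repair is exactly what the paper does: measure distance only in the world sort (path distance through $R$-edges projected to worlds, i.e., tree distance), and handle $\I$-pebbles by no distance bookkeeping at all but by the fixed ``mirror copy'' bijection between the index sorts of the two enriched structures. With that metric the accumulated thresholds $\sum_{m<q}2^{q-m}<2^q=n$ keep all near-move play inside the region where $\mathcal{N}$ and $\mathcal{N}'$ literally coincide, and far moves land in fresh copies as you intend.
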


Assuming this lemma, consider a $\SFO$  formula $\alpha(x)$ which is bisimulation
invariant. It is $n$-local for a syntactically determined $n$. We now claim that
$\alpha(x)$ is $n$-bisimulation invariant. To prove this, consider
$\M_1,w_1 \bisEq_n \M_2,w_2$. We need to show that 
$\Mfo_1,[x\mapsto w_1] \Vdash\alpha(x)$ iff $\Mfo_2,[x\mapsto w_2] \Vdash \alpha(x)$.

Suppose that $\Mfo_1,[x\mapsto w_1] \Vdash\alpha(x)$. By locality,
$\Mfo_1 | n,[x\mapsto w_1] \Vdash\alpha(x)$. Now observe that
$\M_1 | n,w_1 \bisEq \M_2 | n,w_2$. By bisimulation invariance of
$\alpha(x)$, $\Mfo_2 | n,[x\mapsto w_2] \Vdash \alpha(x)$. But then
again by locality, $\Mfo_2,[x\mapsto w_2] \Vdash \alpha(x)$, and
we are done.

Thus it only remains to prove the locality lemma. For this, it is
convenient to consider the {\em Ehrenfeucht-Fraisse} ($\EF$)  game for
$\SFO$. In this game we have two types of pebbles, one for $\W$ and 
the other for $\I$.

The game is played between two players Spoiler($\PI$) and  Duplicator($\PII$) on two $\SFO$ structures. A configuration of the game is given by 
$[(\Mfo, \overline{s}); (\Mfo',\overline{t})]$ where $\overline{s} \in (\W\cup \I )^*$ is a finite string $(\W \cup \I)$ and similarly $\overline{t} \in (\W' \cup \I')^*$.

Suppose the current configuration is $[(\Mfo, \overline{s}); (\Mfo',\overline{t})]$.  In a $\W$ round, $\PI$ places a $\W$ pebble on some $\W$ sort in one of the structures and $\PII$ responds by placing a $\W$ pebble on a $\W$ sort in the other structure. In a $\I$ round, similarly  $\PI$ picks one structure and places an $\I$ pebble on some $\I$ sort and $\PII$ responds by placing an $\I$ pebble on some $\I$ sort in the other structure. In both cases, the new configuration is updated to $[(\Mfo, \overline{s}s); (\Mfo',\overline{t}t)]$ where $s$ and $t$ are the new elements(either $\W$ or $\I$ sort) picked in the corresponding structures.

A $(q_x,q_\tau)$ round game is one where $q_x$ many pebbles of type $\W$ 
are used and $q_\tau$ many pebbles of type $\I$ is used.  Player $\PII$ wins after $(q_x,q_\tau)$ if after $(q_x, q_\tau)$ rounds, if in $[(\Mfo, \overline{s}); (\M',\overline{t})]$ the mapping $f(s_i) = t_i$ forms a partial isomorphism over $\Mfo$ and $\Mfo'$. Otherwise $\PI$ wins.

It can be easily shown that $\PII$ has a winning strategy in the $(q_x,q_\tau)$ 
round game over two structures iff they agree on all formulas with quantifier 
rank of $\VarX$ sort $\le q_x$ and quantifier rank of $\VarTau$ sort $\le q_\tau$.

Let $\M,w$ be any tree structure. To prove lemma \ref{lemma: q. rank q implies 2^q local}, 
we need to prove that $\Mfo,w \models \alpha(x)$ iff $\Mfo | n \models \alpha(x)$.

Let $q = q_x + q_\tau$ and $\mathfrak{N}$ be $q$ disjoint copies of $\Mfo$ and $\Mfo |n$. Note that \emph{inclusion} relation $G$ over $\Mfo$ and $\Mfo| n$ forms a bisimulation. Also note that $G$ continues to be a bisimulation over the disjoint union of $\mathfrak{N}\uplus\Mfo, w$ and $\mathfrak{N}\uplus\Mfo |n,\ w$. Moreover, notice that $(\Mfo,w)$ is bisimilar to $(\mathfrak{N}\uplus\Mfo, w)$
 and further $(\Mfo |n,\ w)$ is bisimilar to $(\mathfrak{N}\uplus \Mfo|n,\ w)$.
 
 Now since $\alpha(x)$ is bisimulation invariant, it is enough to show that $\PII$ has a winning strategy in the game starting from $[(\mathfrak{N}\uplus\Mfo, w),\ (\mathfrak{N}\uplus\Mfo|n,\ w)]$.

The winning strategy for $\PII$ is to ensure that at every round $m < (q_x + q_\tau)$ the critical distance $d_m = 2^{q-m}$ is respected:

If $\PI$ places $\W$ pebble on a $\W$ sort which is  within $d_m$ 
of an already pebbled $\W$ pebble, $\PII$ plays according to a local isomorphism in the $d_m$-
neighbourhoods of previously pebbled elements (exists since $n = 2^q$ and $m < q$); if $\PI$ places a $\W$ pebble somewhere beyond $2^{q-m}$ distance from all $\W$ pebbles previously used, then, $\PII$ responds in a fresh isomorphic copy of
type $ \Mfo$ or $\Mfo | n$ correspondingly (again, it is guaranteed  to exist since previously at most $m-1 (< q)$ would have been used).

If $\PI$ decides to use an $\I$ pebble and places it  on some $\I$ sort $i$ in one structure, then  $\PII$ responds by placing an $\I$ pebble on $i$ in the mirror copy in the other structure, where by mirror copy we mean: for $\Mfo$ or $\Mfo|n$ in $\mathfrak{N}$ then the mirror copy in the other structure is itself and the original $\Mfo$ and $\Mfo|n$ are mirror copies of each other.

\section{Satisfiability problem}

The satisfiability problem for $\IQML$ can be solved by sharpening the
completeness proof of the axiom system by showing that every consistent
formula is satisfied in a model of bounded size. Indeed, a $\PSPACE$ 
decision procedure can be given along the lines of Grove and Halpern \cite{grove1}.
However, we give a tableau procedure for $\IQML$ which is instructive, and
as we will observe later, neatly generalizes to more expressive logics.

Given a formula $\phi$, we set $I = \{ c_\alpha \mid \existsDiamond \alpha \in SF(\phi)\} \cup \{ d_{\beta} \mid \existsBox \beta \in SF(\phi)\}$ where $SF(\phi)$, the set of
subfomulas of $\phi$ is defined in the standard way. This forms the index set where 
$c_\alpha$ and $d_\beta$ act as witnesses for the corresponding formulas.

We construct a tableau tree structure $T = (W,V,E,\lambda)$ where $W$ is a finite set, $(V,E)$ is a  rooted tree and $\lambda: V \mapsto L$ is a labelling map. Each element in $L$ is of 
the form $(w: \Gamma,i_\chi)$, where $w \in W$, $\Gamma$ is a finite 
set of formulas and $i_\chi\in I$.  The intended meaning of 
the label is that the node constitutes a world $w$ that satisfies the formulas in 
$\Gamma$ and $i_\chi$ is the incoming label edge of $w$.  

The tableau rules for $\IQML$ are inspired from the tableau procedure for the bundled fragment of first order modal logic introduced in \cite{PRW18}. The $(\land)$ and $(\lor)$ tableau rules are standard. For the modalities, the intuition  for the corresponding tableau rule is the following: Suppose that we are in an intermediate step of tableau construction when we have formulas $\{ \existsDiamond \alpha, \existsBox \beta, \forallDiamond \phi, \forallBox \psi\}$ to be satisfied at a node $w$. 
For this, first we need to add a new $c_\alpha$ successor node $wv_\alpha$ where $\alpha$ holds; this new node inherits not only $\alpha$ but also $\psi$.  Also, we need a $d_\beta$ successor which inherits $\beta, \phi$ and $\psi$. Finally for each $e_\gamma \in I$ we need a $\phi$-successor which  also inherits $\psi$.

The $(\BR)$ tableau rule extends this idea when there are multiple occurrences of each kind of formulas above.
In general if the set of formulas considered at node $w:(A,B,C,D)$ where  $A = \{ \existsDiamond \alpha_1 .. \existsDiamond\alpha_{n_1} \};\ B=  \{ \existsBox \beta_1 .. \existsBox \beta_{n_2}\};\ C = \{ \forallDiamond\phi_1 .. \forallDiamond \phi_{m_1}\}$ and $D =  \{ \forallBox \psi_1 .. \forallBox  \psi_{m_2}\}$. Let $D' = \{ \psi \mid \forallBox \psi \in D\}$. The $\BR$ rule is given as follows:

\begin{center}

\noindent \begin{tabular}{c}

$\dfrac{w:(A,B,C,D)}
{\{\langle wv_{\alpha_i}: (\alpha_i, D'),c_{\alpha_i} \rangle \mid i \le n_1\} \cup}
(\BR)$\\ \medskip
 $\{\langle wv^k_{\beta_j}: (\beta_j, \phi_k,D'), d_{\beta_j}\rangle \mid k \le m_1, j\le n_2\}\cup$\\ \medskip
$\{\langle wv^k_{e_\chi}: (\phi_k, D), e_\chi \rangle \mid l\le m_1, \chi \not\in (A \cup B)\}
$\\

\end{tabular}\\

\end{center}

From an `open tableau' we can construct a model for $\phi$, along the lines of \cite{PRW18}. Conversely it can be proved that every satisfiable formula has an open tableau. 

This tableau construction can be extended to the `bundled fragment' of full $\TML$ 
where we have predicates of arbitrary arity and the quantifiers and modalities occur 
(only) in the form $\forall x \Box_x \phi$ and $\exists x \Box \phi$ (and their 
duals). The proof follows the lines of \cite{PRW18}.

\section{Discussion}
We have studied the variable-free fragment of $\PTML$, with implicit modal
quantification. We could also consider more forms of implicit quantification
such as $\Box\forall$ and $\Diamond\forall$ modalities, though there is no
obvious semantics to them. These logics are the obvious variable free 
versions of monadic `bundled' fragments of $\TML$. One could consider a
similar exercise for `bundled' fragments of first order modal logic ($\FOML$).
As \cite{PRW18} shows, this is a decidable logic for increasing domain
semantics. 

Our study suggests that there are other forms of implicitly quantified 
modal logics. For instance, is there an implicit {\em hybrid} version 
of the logic studied by Wang and Seligman \cite{WangTML}?

A natural question is the delimitation of expressiveness of these logics: 
which are the properties of models expressed only by $\exists\Box$ or
only by $\forall\Box$ modalities ? How does nesting of these modalities
increase expressive power ? We believe that the model theory of implicit
modal quantification may offer interesting possibilities for abstract 
specifications of some infinite-state systems. However, for such study,
we will need to consider transitive closures of accessibility relations,
and this seems to be quite challenging.

Recent developments in tools for model checking and other decision 
procedures for fragments of $\FOML$ offer a promising direction to develop
similar practical frameworks for $\IQML$ and other decidable fragments of 
term-modal logics. Such tools can be of help in the synthesis and verification
of some classes of systems with unboundedly many agents.

\paragraph{Acknowledgement.} We thank Yanjing Wang for his insightful and extensive discussions on the theme of this paper.

\bibliographystyle{splncs04}
 \bibliography{ref}

\end{document}